\documentclass[lettersize,journal]{IEEEtran}
\usepackage{amsmath,amsfonts}
\usepackage{algorithmic}
\usepackage{algorithm}
\usepackage{array}
\usepackage[caption=false,font=normalsize,labelfont=sf,textfont=sf]{subfig}
\usepackage{textcomp}
\usepackage{stfloats}
\usepackage{url}
\usepackage{verbatim}
\usepackage{graphicx}
\usepackage{cite}

\usepackage{mathrsfs}
\usepackage{cases}
\usepackage{bm}
\usepackage{makecell}
\usepackage{amsthm}
\newtheorem{theorem}{Theorem}
\allowdisplaybreaks[4]

\usepackage{color}

\begin{document}

\title{Latency-Constrained Resource Synergization for Mission-Oriented 6G Non-Terrestrial Networks}

\author{Yueshan Lin, Wei Feng, \IEEEmembership{Senior Member, IEEE}, Yunfei Chen, \IEEEmembership{Fellow, IEEE}, Yongxu Zhu, \IEEEmembership{Senior Member, IEEE}, Ning Ge, \IEEEmembership{Member, IEEE}, Shi Jin, \IEEEmembership{Fellow, IEEE}
        % <-this % stops a space
\thanks{Yueshan Lin, Wei Feng, and Ning Ge are with the Department of Electronic Engineering, State Key Laboratory of Space Network and Communications, Tsinghua University, Beijing 100084, China (e-mail: lin-ys17@tsinghua.org.cn; fengwei@tsinghua.edu.cn; gening@tsinghua.edu.cn).}% <-this % stops a space
\thanks{Yunfei Chen is with the Department of Engineering, University of Durham, Durham, U.K. DH1 3LE (e-mail: Yunfei.Chen@durham.ac.uk).}
\thanks{Yongxu Zhu and Shi Jin are with the National Mobile Communications Research Laboratory, Southeast University, Nanjing 210096, China. (e-mail: yongxu.zhu@seu.edu.cn; jinshi@seu.edu.cn).}
% \thanks{Manuscript received April 19, 2021; revised August 16, 2021.}
}

% The paper headers
%\markboth{Journal of \LaTeX\ Class Files,~Vol.~14, No.~8, August~2021}%
%{Shell \MakeLowercase{\textit{et al.}}: A Sample Article Using IEEEtran.cls for IEEE Journals}

%\IEEEpubid{0000--0000/00\$00.00~\copyright~2021 IEEE}
% Remember, if you use this you must call \IEEEpubidadjcol in the second
% column for its text to clear the IEEEpubid mark.

\maketitle

\begin{abstract}
This paper investigates latency-constrained resource synergization for mission-oriented non-terrestrial networks (NTNs) in post-disaster emergency scenarios. When terrestrial infrastructures are damaged, unmanned aerial vehicles (UAVs) equipped with edge information hubs (EIHs) are deployed to provide temporary coverage and synergize communication and computing resources for rapid situation awareness. We formulate a joint resource configuration and location optimization problem to minimize overall resource costs while guaranteeing stringent latency requirements. Through analytical derivations, we obtain closed-form optimal solutions that reveal the fundamental tradeoff between communication and computing resources, and develop a successive convex approximation method for EIH location optimization. Simulation results demonstrate that the proposed scheme achieves approximately 20\% cost reduction compared with benchmark approaches, validating its optimality and effectiveness for mission-critical emergency response applications in the sixth-generation (6G) era.
\end{abstract}

\begin{IEEEkeywords}
Edge information hub, latency, location optimization, non-terrestrial network, resource configuration.
\end{IEEEkeywords}

\section{Introduction}
Natural disasters (floods, earthquakes, fires, {\it etc.}) pose a great threat to human lives and the economy \cite{intro 01}. To reduce the losses caused by disasters, rapid situation awareness of the disastrous areas is of great significance. This requires that the data collected by field sensors be transmitted to the emergency command center with low latency \cite{intro 02}. However, terrestrial communication infrastructures could be damaged in such scenarios \cite{intro 03}. Disasters may also take place in rural or remote areas (forests, oceans, {\it etc.}) without terrestrial network coverage \cite{intro 04}. Therefore, non-terrestrial networks (NTNs), such as satellites and unmanned aerial vehicle (UAV), need to be utilized to provide coverage in disastrous areas \cite{intro 05} \cite{intro s01}.

Low-latency sensing data uploading brings many challenges to NTN system designs. Specifically, the communication capabilities of UAVs and satellites are inherently limited due to the size, weight and power constraints \cite{NTN} \cite{intro s02}. When high-resolution sensors generate a large amount of data, the NTN’s communication capability might not satisfy the latency requirements for data uploading. To resolve this problem, mobile edge computing (MEC) is a prospective paradigm \cite{MEC}. By processing the sensing data at the network edge, the key information of the data could be extracted before uploading, which reduces the overall latency. 

However, MEC-empowered NTNs lead to mobile infrastructure deployment, complicated data scheduling and multi-dimensional network resource orchestration. The number of its possible network states is considerably larger than traditional cellular networks, making it difficult for proper and efficient system design. To address this concern, an edge information hub (EIH) has been considered to incorporate and synergize the communication and computing resources in the NTN \cite{EIH RW sat-UAV MEC 05}. The information contained in the sensing data is extracted, processed and transmitted at the EIH for rapid situation awareness. Compared with traditional MEC networks, an EIH-empowered network enables the cooperative design of the communication and computing process to improve the network efficiency. Besides, basic structures and relationships within an EIH-empowered system could be explored to ease the system design.

In this paper, we investigate through analytical tools the fundamental tradeoff between communication and computing resource configuration in an EIH-empowered NTN. Building upon this tradeoff, we propose a joint resource configuration and location optimization scheme for the EIH to achieve efficient data uploading. Compared with the previous work \cite{EIH RW sat-UAV MEC 05} that optimizes resource orchestration and data scheduling, the key novelty of this proposed scheme lies in determining both the exact amount of resources deployed on the EIH and its optimal location, thereby meeting the uploading latency requirement while minimizing the overall resource cost. Notably, resource configuration and resource orchestration are jointly considered in the optimization, as the former is inherently affected by how the resources are allocated. The proposed joint optimization scheme ensures timely completion of data uploading while preventing excessive resource usage with low efficiency. The main contributions of this paper are summarized as follows.

\IEEEpubidadjcol
\begin{itemize}
	\item We propose an EIH-empowered NTN model for low-latency data uploading. Using this model, we formulate a joint resource configuration and position optimization problem to minimize the overall resource cost, under the constraint of uploading latency. The considered resources include the user-UAV transmission bandwidth, the UAV-satellite data rate and EIH's computing capability. The orchestration of these resources are jointly considered in the optimization problem.
	\item We apply a two-step strategy to solve the problem. First, the resource configuration is optimized with fixed EIH location. To address the complicated piecewise functions in the optimization problem, we provide a closed-form optimal solution through analytical derivations. Based on this closed-form solution, we illustrate the basic tradeoff between the EIH's communication and computing resource configuration.
	\item We then utilize the successive convex approximation method to optimize the EIH's location based on the closed-form expressions of optimal resource configuration. On this basis, we propose a joint resource configuration and location optimization scheme. Simulation results corroborate our theoretical achievements, and also demonstrate the superiority of our proposed scheme in reducing the overall resource cost.
\end{itemize}

The rest of this paper is as follows. In Section II, we review the related works on non-terrestrial networks and edge computing. The system model of the EIH-empowered NTN is introduced in Section III. In Section IV, we formulate the joint resource configuration and location optimization problem for cost minimization. We also provide our solution in this section. The simulation results are presented in Section V, and we draw the conclusion in Section VI. 

\section{Related Works}
\hspace{3.2mm} {\it 1) Non-Terrestrial Networks:} A number of existing studies have explored utilizing NTNs to provide communication services in rural or remote areas. Fang {\it et al.} \cite{RW NTN 01} offered a novel perspective that a complex NTN system can be regarded as multifarious combinations of three basic models, and further discussed the state-of-the-art technologies and future research directions for each model. More specifically, Wei {\it et al.} \cite{RW NTN s01} comprehensively reviewed the scenarios and key technologies of UAV-assisted data collection, addressing major issues including sensor clustering, UAV data collection modes, as well as joint path planning and resource allocation. Bian {\it et al.} \cite{RW NTN s02} considered utilizing a single UAV for data collection under multi-jammer attacks, where the collection schedule, power control, and UAV trajectory were jointly optimized to improve the uploading data rate. Li {\it et al.} \cite{RW NTN s03} further investigated a multi-UAV scenario, jointly designing sensor-UAV association and UAV trajectories to minimize task completion time. Liu {\it et al.} \cite{RW NTN 03} explored employing a UAV swarm to provide network coverage, proposing a power and subchannel allocation scheme to enhance network efficiency. Moreover, some studies have considered integrated satellite-UAV networks for communication \cite{RW NTN 04} \cite{RW NTN 05}. Specifically, Li {\it et al.} \cite{RW NTN 04} maximized system energy efficiency through optimizing UAV deployment, subchannel allocation, and power control. Ma {\it et al.} \cite{RW NTN 05} investigated a complex network comprising multiple UAVs and satellites, jointly designing satellite-UAV association, UAV trajectories, and power and bandwidth allocation based on successive convex approximation and block coordinate descent (BCD) techniques to maximize data gathering efficiency.

{\it 2) MEC-Empowered UAV Networks:} Due to the limited capacity of NTNs, many studies have considered the use of MEC technologies to empower the network. A major part of the studies focused on MEC-empowered UAV networks. For instance, Zhang {\it et al.} \cite{RW UAV MEC 01} explored the joint design of UAV trajectory, data scheduling decision, power usage and computing resource allocation in a single-UAV network. They proposed a three-step solution framework to minimize the weighted sum of task latency and energy consumption. In \cite{RW UAV MEC 02}, Wang {\it et al.} utilized a UAV to provide energy for users wirelessly and meanwhile collect their data, where the UAV trajectory, data scheduling and energy harvesting time are jointly optimized for energy consumption minimization. Zhang {\it et al.} \cite{RW UAV MEC 03} further considered a multi-UAV network, and the user association is jointly optimized with the UAV deployment and task data allocation to minimize the average response delay. In \cite{RW UAV MEC 04}, Wang {\it et al.} proposed an energy-oriented optimization scheme for multi-UAV trajectory, user association and computing resource allocation based on deep reinforcement learning. Liu {\it et al.} \cite{RW UAV MEC 05} considered optimizing the number of UAVs jointly with the UAV deployment and data scheduling decision for latency minimization. In \cite{RW UAV MEC 06}, Seid {\it et al.} investigated an integrated network with coexisting UAVs and ground base stations, where the user association and the communication and computing resources are jointly designed to reduce both latency and energy consumption. Zhao {\it et al.} \cite{RW UAV MEC 07} proposed a cooperative multi-agent deep-reinforcement learning framework to jointly design the trajectories, task data allocation, and resource management of UAVs.

{\it 3) MEC-Empowered Integrated Satellite-UAV Networks:} A number of existing studies further investigated MEC-empowered integrated satellite-UAV networks. Lin {\it et al.} \cite{RW sat-UAV MEC 01} proposed three minimal structures of an integrated satellite-MEC network and discussed the characteristics and key problems for each structure. Many existing studies focused on the problems of data scheduling and network resource allocation. For instance, Zhou {\it et al.} \cite{RW sat-UAV MEC 02} considered the task data scheduling of a single MEC-enabled UAV, and they proposed a latency-oriented algorithm based on deep reinforcement learning. Tun {\it et al.} \cite{RW sat-UAV MEC 03} utilized collaborative MEC-enabled UAVs for data offloading, where the data scheduling decision is optimized for latency minimization. In \cite{RW sat-UAV MEC 04}, Ei {\it et al.} jointly optimized the data scheduling and the bandwidth allocation to minimize the task completion latency under energy consumption constraints. Assuming concurrent communication and computing during data uploading, Lin {\it et al.} \cite{EIH RW sat-UAV MEC 05} proposed an optimal data scheduling and resource allocation scheme to further reduce the uploading latency.

Moreover, some studies focused on designing the UAVs’ positions or trajectories in an integrated satellite-UAV network. For instance, Chao {\it et al.} \cite{RW sat-UAV MEC 06} jointly designed the UAV placement and the data offloading decision to maximize the MEC service profit. Considering different conditions of satellite connection, Jung {\it et al.} \cite{RW sat-UAV MEC 07} proposed a joint UAV trajectory and data scheduling scheme to reduce system energy consumption with task latency constraints. In \cite{RW sat-UAV MEC 08}, Qi {\it et al.} further considered the user association problem jointly with UAV deployment and task offloading decision, where an iterative algorithm was proposed for latency minimization. In \cite{RW sat-UAV MEC 09}, Tong {\it et al.} jointly optimized the user clustering and the locations of UAV collection points to minimize the age of information of collected data. Moreover, Huang {\it et al.} \cite{RW sat-UAV MEC 10} considered a complex network consisting of multiple UAVs and satellites. They investigated the optimal association among users, UAVs and satellites, along with UAV trajectory and computing resource allocation, to minimize the energy consumption.

Despite the progress made so far, there still exist many research gaps in the system design of edge-computing-empowered NTNs. Specifically, most existing studies focused on the resource allocation problem, while the resource configuration has not been considered. Configuring proper amount of communication and computing resources in the network guarantees the data uploading task be completed within the latency constraint, while preventing excessive resource usage. In this paper, we focus on an EIH-empowered network, where the communication, computing and storage resources are synergized at the EIH. Through analytical derivations, we investigate the basic tradeoff between the network resources. On this basis, we propose the closed-form expressions of the optimal communication and computing resource configuration to minimize the overall resource cost.

\begin{figure*}[t]
	\centering
	\includegraphics[height=3.4in]{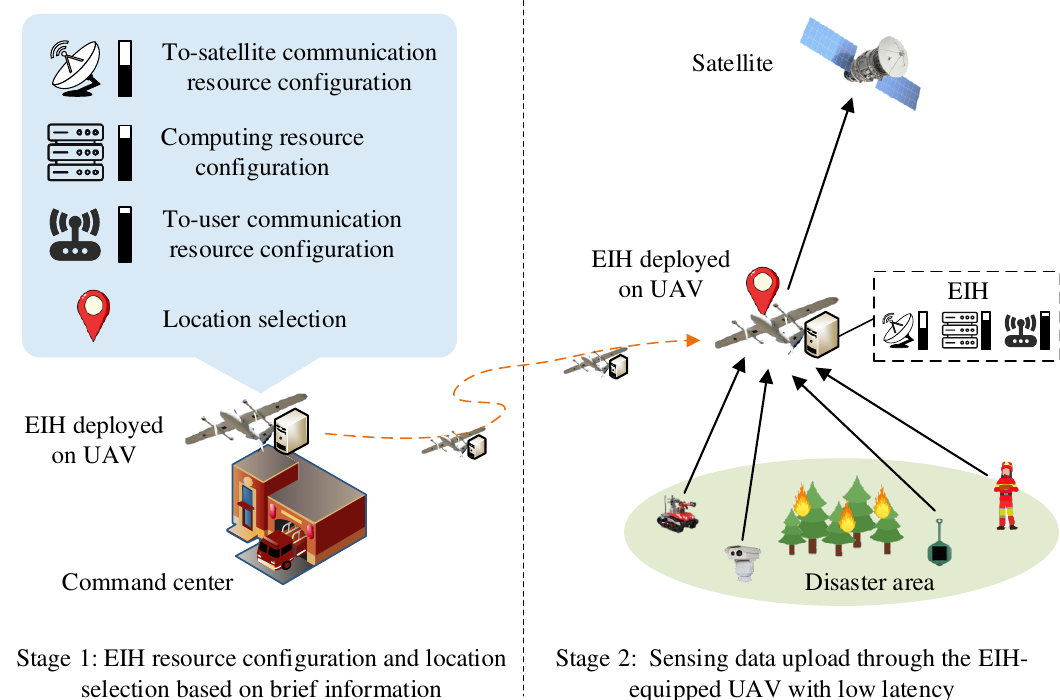}%
	\caption{EIH-empowered non-terrestrial network used for sensing data uploading in disaster areas.}
	\label{fig syst mod}
\end{figure*}

\begin{table}[t]
	\caption{List of Symbols}
	\centering
	\renewcommand{\arraystretch}{1.2}
	\begin{tabular}{|m{0.6in}|m{2.4in}|}
		\hline
		Symbol & Definition \\ \hline
		$U$ & Number of users \\ \hline
		$\mathcal{U}$ & Set of users \\ \hline
		$(h_{u,x},h_{u,y})$ & Horizontal coordinate of user $u$ \\ \hline
		$D_u$ & Amount of user $u$'s task data \\ \hline
		$T_{\rm req}$ & Overall latency requirement of data uploading \\ \hline
		$(h_x,h_y)$ & Horizontal coordinate of the EIH \\ \hline
		$H$ & Height of the EIH \\ \hline
		$B_{\rm total}$ & Total configured bandwidth for EIH’s to-user communication module \\ \hline
		$R^S_{\rm total}$ & Total configured data rate for EIH’s to-satellite communication module \\ \hline
		$F_{\rm total}$ & Total configured CPU frequency for EIH’s computing module \\ \hline
		$V_{\rm total}$ & Total configured storage for EIH’s storage module \\ \hline
		$B_u$ & User $u$'s allocated bandwidth for user-EIH transmission \\ \hline
		$\eta_u$ & Ratio of user $u$'s to-be-computed task data \\ \hline
		$F_u$ & User $u$'s allocated CPU frequency \\ \hline
		$\rho_u$ & Computing intensity of user $u$'s task data \\ \hline
		$\zeta_u$ & Computing output-to-input ratio of user $u$'s task data \\ \hline
		$R_u^S$ & User $u$'s allocated data rate for EIH-satellite transmission \\ \hline
		$g_u$ & Channel gain of the transmission between user $u$ and the EIH \\ \hline
		$s_u$ & Small-scale channel gain of the transmission between user $u$ and the EIH \\ \hline
		$l_u$ & Large-scale channel gain of the transmission between user $u$ and the EIH \\ \hline
		$c$ & Light speed \\ \hline
		$f$ & Carrier frequency of user-EIH transmission \\ \hline
		\makecell{$(a, b, \eta_{\rm LoS},$ \\ $\ \ \ \ \eta_{\rm NLoS})$} & Large-scale channel parameters related to the propagation environment \\ \hline
		$d_u$ & Distance between user $u$ and the EIH \\ \hline
		$\theta_u$ & Elevation angle between user $u$ and the EIH \\ \hline
		$R_u$ & Data rate of the transmission between user $u$ and the EIH \\ \hline
		$p_u$ & User $u$'s transmit power \\ \hline
		$\sigma^2$ & Noise power \\ \hline
		$T_u$ & Overall uploading latency for user $u$ \\ \hline
		$V_u$ & Minimum required storage for user $u$ \\ \hline
		\makecell{$(a_1, a_2, \ \ \ $ \\ $\ \ \ a_3, a_4)$} & Weight parameters of the overall resource costs \\ \hline
	\end{tabular}
	\label{Notation}
\end{table}

\section{System Model}
In Fig. \ref{fig syst mod}, we show how the EIH-empowered UAV network helps achieve situation awareness in a post-disaster emergency response scenario. In the first stage, an EIH-equipped UAV is prepared at the command center after disasters occur. To meet the uploading latency requirement and meanwhile avoid resource overuse, an appropriate amount of communication and computing resources need to be configured on the EIH, and the location of the EIH-equipped UAV needs to be properly selected. Note that such decisions need to be made only based on brief information on the disaster area ({\it e.g.}, sensors' distribution and data volume). In the second stage, the properly-configured EIH-equipped UAV is dispatched to the selected location and helps upload the sensing data through satellite backhaul.

We assume that there are $U$ sensors (denoted as $\mathcal{U} = \{1,2,...,U\}$). Sensor $u$ is located at coordinate $(h_{u,x},h_{u,y},0)$, which is assumed fixed and known. The amount of sensor $u$'s sensing data is denoted as $D_u$. All sensor data need to be uploaded within latency less than $T_{\rm req}$. The EIH-equipped UAV is located at $(h_{x},h_{y},H)$, where the height $H$ is fixed whereas the horizontal coordinate $(h_{x},h_{y})$ can be adjusted. The EIH incorporates two communication modules for to-user and to-satellite transmissions, a computing module (\textit{i.e.}, an MEC server) to process the data and extract key information, and a storage module to cache the data pending for transmission or processing. We assume that the EIH’s to-user communication module is configured with bandwidth $B_{\rm total}$, and the to-satellite communication module is configured with data rate $R^S_{\rm total}$. Besides, the EIH’s computing module is assumed to be configured with CPU frequency $F_{\rm total}$, and the total configured storage is denoted as $V_{\rm total}$. The overall system variables and parameters are summarized in Table \ref{Notation}.

As shown in Fig. \ref{fig data flow}, we assume that the transmissions and computing proceed concurrently for lower uploading latency. Specifically, users transmit their data to the EIH and the data is stored in its storage module. We assume that the user-EIH transmissions adopt the frequency division multiple access (FDMA) scheme, where the allocated bandwidth for user $u$ is denoted as $B_u$. The data of user $u$ could be divided into two parts, namely to-be-computed data (ratio $\eta_u$) and to-be-uploaded data (ratio $1-\eta_u$). The to-be-computed data are processed in the EIH's computing module, and the outcome are returned to the storage module as to-be-uploaded data. The to-be-uploaded data (including both the original to-be-uploaded data and the computation outcome) are transmitted to the satellite utilizing the EIH's to-satellite terminal. For user $u$, the allocated computing resources are $F_u$, the computing intensity is $\rho_u$ (in CPU cycles/bit), the computing output-to-input ratio is $\zeta_u$, and the allocated to-satellite data rate is $R_u^S$. Note that these communication and computing processes are carried out concurrently instead of sequentially.

\begin{figure}[t]
	\centering
	\includegraphics[height=2.4in]{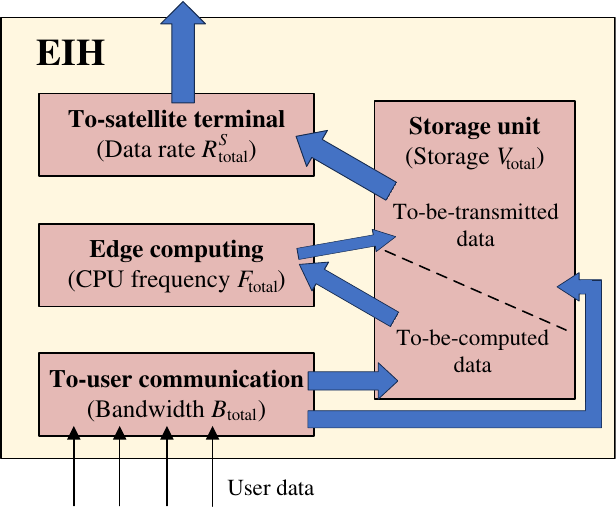}%
	\caption{Data flow diagram of the EIH during the data uploading process.}
	\label{fig data flow}
\end{figure}

For the user-EIH transmissions, the received signal at the EIH from user $u$ is given by
\begin{equation}
	y_u = g_u x_u + n_u,
\end{equation}
where $x_u$ denotes the transmit signal, $n_u$ denotes the additive white Gaussian noise, and $g_u$ denotes the user-EIH channel gain, which can be modeled as
\begin{equation}
	g_u = s_u \cdot l_u,
\end{equation}
where $s_u$ represents the fast-varying small-scale fading, following a complex Gaussian distribution with zero mean and unit variance (\textit{i.e.}, Rayleigh fading), and $l_u$ is the large-scale channel which can be expressed as \cite{channel}
\begin{equation}
	l_u = \frac{c}{4\pi f d_u} 10^{ -\frac{1}{20} \left( \frac{\eta_{\rm LoS}-\eta_{\rm NLoS}}{1+a\exp(-b(\theta_u-a))}+\eta_{\rm NLoS} \right) }, \label{large-scale channel}
\end{equation}
where $c$ denotes the speed of light, $f$ denotes the carrier frequency, and $(\eta_{\rm LoS},\eta_{\rm NLoS},a,b)$ are constants related to the propagation environment. Besides, $d_u$ and $\theta_u$ are the distance and elevation angle between the ground user and the UAV, respectively, which are given by
\begin{equation}
	d_u = \sqrt{(h_x-h_{u,x})^2+(h_y-h_{u,y})^2+H^2}, \label{distance}
\end{equation}
\begin{equation}
	\theta_u = \tan^{-1}\left(\frac{H}{\sqrt{(h_x-h_{u,x})^2+(h_y-h_{u,y})^2}}\right). \label{elevation angle}
\end{equation}
Since sensors' coordinates and the EIH's height are fixed, $l_u$ is regarded as a function of $(h_x,h_y)$. The ergodic transmission rate for user $u$ could be formulated as
\begin{equation}
	R_u = B_u \mathbb{E} \left[ \log_2\left(1+\frac{p_u|g_u|^2}{\sigma^2}\right) \right],
	\label{ergodic rate}
\end{equation}
where $p_u = \mathbb{E} |x_u|^2$ denotes the transmit power and $\sigma^2$ denotes the noise power.

Since a data-flow-based model is considered in the EIH, for user $u$, the overall upload latency $T_u$ and minimum required storage $V_u$ are subject to the data rates of the corresponding data flows. Furthermore, these data flow rates could be determined by user $u$'s allocated communication resources ($R_u,R_u^S$) and computing resources ($F_u$), as well as data scheduling ($\eta_u$). Referring to \cite{EIH RW sat-UAV MEC 05}, the piecewise expressions of $T_u(R_u,R_u^S,F_u,\eta_u)$ and $V_u(R_u,R_u^S,F_u,\eta_u)$ are given in Table \ref{T_u V_u}. The detailed derivation of these expressions could be referred to Appendix A in \cite{EIH RW sat-UAV MEC 05}.

\begin{table*}[t]
	\caption{The expressions of user $u$'s overall upload latency $T_u(B_u,R_u^S,F_u,\eta_u)$ and minimum required storage $V_u(B_u,R_u^S,F_u,\eta_u)$ \cite{EIH RW sat-UAV MEC 05}.}
	\centering
	\renewcommand{\arraystretch}{2.1}
	\renewcommand{\cellset}{}
	\begin{tabular}{|m{1.65in}|m{2.45in}|m{1.0in}|m{1.3in}|}
		\hline
		\textbf{Condition} & \textbf{Physical interpretation} & $T_u(R_u,R_u^S,F_u,\eta_u)$ & $V_u(R_u,R_u^S,F_u,\eta_u)$ \\ \hline
		\makecell*[l]{$\eta_u R_u \geq \frac{F_u}{\rho_u}$, $\frac{F_u}{\rho_u} \geq \frac{\eta_u R_u^S}{\zeta_u\eta_u+1-\eta_u}$, \\ $\frac{\zeta_u F_u}{\rho_u}+(1-\eta_u) R_u \geq R_u^S$} & Both the \textbf{to-be-computed} and \textbf{to-be-uploaded} data accumulate in the storage during the user-EIH transmission, and the \textbf{former} runs out first afterwards. & $\frac{D_u(\zeta_u\eta_u+1-\eta_u)}{R_u^S}$ & $\frac{D_u\left[R_u - R_u^S - (1-\zeta_u) \frac{F_u}{\rho_u}\right]}{R_u}$ \\ \hline
		\makecell*[l]{$\eta_u R_u \geq \frac{F_u}{\rho_u}$, $\frac{F_u}{\rho_u} < \frac{\eta_u R_u^S}{\zeta_u\eta_u+1-\eta_u}$, \\ $\frac{\zeta_u F_u}{\rho_u}+(1-\eta_u) R_u \geq R_u^S$} & Both the \textbf{to-be-computed} and \textbf{to-be-uploaded} data accumulate in the storage during the user-EIH transmission, and the \textbf{latter} runs out first afterwards. & $\frac{\eta_u D_u \rho_u}{F_u}$ & $\frac{D_u\left[R_u - R_u^S - (1-\zeta_u) \frac{F_u}{\rho_u}\right]}{R_u}$ \\ \hline
		\makecell*[l]{$\eta_u R_u \geq \frac{F_u}{\rho_u}$, \\ $\frac{\zeta_u F_u}{\rho_u}+(1-\eta_u) R_u < R_u^S$} & Only the \textbf{to-be-computed} data accumulate in the storage during the user-EIH transmission. & $\frac{\eta_u D_u \rho_u}{F_u}$ & $\frac{D_u\left(\eta_u R_u - \frac{F_u}{\rho_u}\right)}{R_u}$ \\ \hline
		\makecell*[l]{$\eta_u R_u < \frac{F_u}{\rho_u}$, \\ $(\zeta_u\eta_u+1-\eta_u) R_u \geq R_u^S$} & Only the \textbf{to-be-uploaded} data accumulate in the storage during the user-EIH transmission. & $\frac{D_u(\zeta_u\eta_u+1-\eta_u)}{R_u^S}$ & $\frac{D_u\left[(\zeta_u\eta_u+1-\eta_u) R_u - R_u^S\right]}{R_u}$ \\ \hline
		\makecell*[l]{$\eta_u R_u < \frac{F_u}{\rho_u}$, \\ $(\zeta_u\eta_u+1-\eta_u) R_u < R_u^S$} & Neither the \textbf{to-be-computed} data nor the \textbf{to-be-uploaded} data accumulate in the storage during the user-EIH transmission. & $\frac{D_u}{R_u}$ & $0$ \\ \hline
	\end{tabular}
	\label{T_u V_u}
\end{table*}

\section{Problem Formulation and Proposed Solution}
\subsection{Problem Formulation}
The purpose of this paper is to determine the optimal location of the EIH $(h_x,h_y)$ and the optimal configuration of the EIH's communication and computing resource, including the total user-EIH transmission bandwidth $B_{\rm total}$, the total EIH-satellite transmission data rate $R^S_{\rm total}$, the EIH's total computing resource $F_{\rm total}$ and total storage resource $V_{\rm total}$. The aim is to minimize the overall resource costs while satisfying the data uploading delay constraint. 

We formulate the optimization problem as follows:
\begin{subequations} \label{problem 1}
	\begin{align}
		\mbox{(P1)} \min_{\substack{h_x, h_y, \\ B_{\rm total}, R^S_{\rm total}, \\ F_{\rm total}, V_{\rm total}, \\ \mathbf{B}, \mathbf{R}^S, \mathbf{F}, \bm{\eta}}} & a_1 B_{\rm total} + a_2 R^S_{\rm total} + a_3 F_{\rm total} + a_4 V_{\rm total} \label{problem 1a} \\
		\mathrm{s.t.} \hspace{8mm} & T_u(R_u,R_u^S,F_u,\eta_u) \leq T_{\rm req},\  u \in \mathcal{U} \label{problem 1b} \\
		& \sum_{u=1}^{U} V_u(R_u,R_u^S,F_u,\eta_u) \leq V_{\mathrm{total}}, \label{problem 1c} \\
		& \sum_{u=1}^{U} B_u \leq B_{\mathrm{total}}, \label{problem 1d} \\
		& \sum_{u=1}^{U} R_u^S \leq R^S_{\mathrm{total}}, \label{problem 1e} \\
		& \sum_{u=1}^{U} F_u \leq F_{\mathrm{total}}, \label{problem 1f} \\
		& B_u, R_u^S, F_u \geq 0, \  u \in \mathcal{U}, \label{problem 1g} \\
		& 0 \leq \eta_u \leq 1, \  u \in \mathcal{U}, \label{problem 1h} \\
		& B_{\rm total}, R^S_{\rm total}, F_{\rm total}, V_{\rm total} \geq 0, \label{problem 1i}
	\end{align}
\end{subequations}
where the overall resource cost is denoted by the weighted sum of the configured resources, and $(a_1,a_2,a_3,a_4)$ are the weight parameters. Since the resource configuration design is coupled with the resource orchestration design, user $u$'s allocated resources $B_u$, $R_u^S$ and $F_u$ as well as data scheduling variable $\eta_u$ are jointly optimized. We note that $R_u = R_u(B_u,h_x,h_y)$ in this problem referring to (\ref{ergodic rate}).

\subsection{Optimal Configuration of Communication and Computing Resources}
In this subsection, we first assume the EIH-enabled UAV's location $(h_x,h_y)$ is given and fixed, and we optimize the resource configuration and resource orchestration. In this case, $R_u = R_u(B_u)$ and therefore $T_u$ and $V_u$ given in Table \ref{T_u V_u} could easily be recast as $T_u(B_u,R_u^S,F_u,\eta_u)$ and $V_u(B_u,R_u^S,F_u,\eta_u)$.

We first analyze the properties of the piecewise functions $T_u(B_u,R_u^S,F_u,\eta_u)$ and $V_u(B_u,R_u^S,F_u,\eta_u)$. Specifically, we propose the following theorem:
\begin{theorem}
	Assume that the values of variables $B_u$, $R_u^S$ and $F_u$ are given, there exists an optimal value of variable $\eta_u^{\mathrm{opt}} = \eta_u^{\mathrm{opt}}(B_u,R_u^S,F_u)$ so that inequalities
	\begin{align}
		& \nonumber T_u^{\eta-\mathrm{opt}}(B_u,R_u^S,F_u) = T_u(B_u,R_u^S,F_u,\eta_u^{\mathrm{opt}}) \\
		& \hspace{25mm} \leq T_u(B_u,R_u^S,F_u,\eta_u),\   \forall 0 \leq \eta_u \leq 1, \label{eta_opt T ineq}
	\end{align}
	and
	\begin{align}
		& \nonumber V_u^{\eta-\mathrm{opt}}(B_u,R_u^S,F_u) = V_u(B_u,R_u^S,F_u,\eta_u^{\mathrm{opt}}) \\
		& \hspace{25mm} \leq V_u(B_u,R_u^S,F_u,\eta_u),\   \forall 0 \leq \eta_u \leq 1, \label{eta_opt V ineq}
	\end{align}
	hold, where the expressions of $\eta_u^{\mathrm{opt}}$, $T_u^{\eta-\mathrm{opt}}(B_u,R_u^S,F_u)$ and $V_u^{\eta-\mathrm{opt}}(B_u,R_u^S,F_u)$ are given in Table \ref{T_u V_u eta_opt}.
	\label{theo 0}
\end{theorem}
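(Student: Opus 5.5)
With $B_u$, $R_u^S$ and $F_u$ fixed, $R_u=R_u(B_u)$ is a fixed positive number, so both $T_u$ and $V_u$ from Table \ref{T_u V_u} become piecewise functions of the single scalar $\eta_u\in[0,1]$. I would study these one-dimensional functions directly. The main point to establish is that one choice of $\eta_u$ minimizes $T_u$ and $V_u$ at the same time. To simplify the algebra, set $r=R_u$, $s=R_u^S$, $c=F_u/\rho_u$ (the computing throughput in input bits per second), and $\kappa(\eta_u)=\zeta_u\eta_u+1-\eta_u$ (the fraction of data remaining after processing). Since $\kappa$ is affine and decreasing when $\zeta_u<1$, every boundary in Table \ref{T_u V_u} is a linear or linear-fractional condition in $\eta_u$. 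Examples are $\eta_u\gtrless c/r$, $\zeta_u c+(1-\eta_u)r\gtrless s$ and $\kappa(\eta_u)r\gtrless s$. Hence $[0,1]$ splits into at most a few intervals, and on each interval the applicable row of the table is fixed.

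\textbf{Step 1 (lower bounds).} I would first derive bounds on $T_u$ and $V_u$ that hold for every $\eta_u$. By inspecting the expressions in each row, or by the physical argument that the user link, the computing module and the satellite link each limit throughput, I expect
\begin{align}
T_u\ge \max\Big\{\frac{D_u}{r},\ \frac{\eta_u D_u}{c},\ \frac{D_u\kappa(\eta_u)}{s}\Big\}
\end{align}
in every row. The row itself shows which term is active. For example, in the first row $\kappa s^{-1}\ge \eta_u c^{-1}$, and this is equivalent to the row condition $c\ge \eta_u s/\kappa$. Minimizing the right-hand side over $\eta_u$ is a min--max of one increasing linear function, $\eta_u D_u/c$, and one decreasing linear function, $D_u\kappa(\eta_u)/s$, together with the constant $D_u/r$. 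The minimizer therefore sits at their crossing point $\eta^\ast=c/(c+\zeta_u c\cdots)$, more precisely at the solution of $\eta s=c\kappa(\eta)$, which is $\eta^\ast=c/(s+c(1-\zeta_u))$, clipped to $[0,1]$. If the constant $D_u/r$ dominates, any $\eta$ in a whole interval is optimal.

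\textbf{Step 2 (choosing $\eta_u^{\rm opt}$).} I would split into cases by comparing $r$, $s$ and $c$. The cases are: $s\ge r$, where no compression is needed and $\eta=0$ already gives row 5 with $T=D_u/r$, $V=0$; $s<\zeta_u r$ and similar saturated cases, where $\eta=1$; and the intermediate regime. In each case I would define $\eta_u^{\rm opt}$ as the value that attains the lower bound on $T_u$. Within the set of $T$-minimizers I would then choose the one that also minimizes $V_u$. Typically this is the smallest $\eta$ for which $\kappa(\eta)r\le s+(1-\zeta_u)\min(\eta r,c)$, which makes the net accumulation rate as small as possible. Writing out these choices gives the entries of Table \ref{T_u V_u eta_opt}.

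\textbf{Step 3 (the $V$ inequality).} Next I would show $V_u(\eta_u^{\rm opt})\le V_u(\eta_u)$ for all $\eta_u$. In rows 1--2, $V_u$ does not depend on $\eta_u$. In row 3 it increases in $\eta_u$, and in row 4 it is affine in $\eta_u$ with slope $-(1-\zeta_u)D_u$. So on each interval $V_u$ is monotone or constant, and it suffices to compare values at interval endpoints while checking continuity across boundaries. Continuity across boundaries can be verified directly, for example at $\eta r=c$, where rows 1 and 4 agree. This shows that $V_u$ as a function of $\eta_u$ is nonincreasing up to a threshold and nondecreasing after it. Its minimizer interval should then intersect the minimizer set of $T_u$ at the chosen $\eta_u^{\rm opt}$.

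\textbf{Main obstacle.} I expect the hardest part to be the simultaneous optimality, that is, proving that the $T$-minimizer and the $V$-minimizer coincide in every regime. This requires careful case bookkeeping over the orderings of $r$, $s$, $c$ and $\zeta_u$, including the degenerate case $\zeta_u\ge1$, where computing is useless and $\eta=0$ should be optimal for both. My approach would be to confirm in each regime that $V_u$ is minimized exactly on the interval $\{\eta: \text{accumulation of to-be-uploaded data is minimal}\}$, and that this interval contains the crossing point $\eta^\ast$.
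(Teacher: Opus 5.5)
Your Step 1 handles $T_u$ differently from the paper, and the idea is sound. Checking the five row conditions of Table~\ref{T_u V_u} shows that $T_u=\max\{D_u/r,\ \eta_uD_u/c,\ D_u\kappa(\eta_u)/s\}$ holds with \emph{equality}. State this explicitly: minimizing a lower bound proves nothing unless the bound is attained. The min--max then gives $\eta^\ast=c/(s+(1-\zeta_u)c)$ directly. The paper instead partitions the $(F_u,\eta_u)$ plane into the zones of Table~\ref{T_u V_u} for the three regimes of $R_u$ versus $R_u^S$. It records the monotonicity of $T_u$ and $V_u$ in each zone and reads off a common minimizer along each line $L_1$--$L_8$.

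The gap is the step you flag yourself. That one $\eta_u$ minimizes $T_u$ and $V_u$ \emph{simultaneously} is only asserted (``should then intersect''). Likewise, the unimodality of $V_u$ in Step~3 does not follow from piecewise monotonicity and continuity alone. Both need the order in which the rows of Table~\ref{T_u V_u} occur along $[0,1]$.

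Your Step~2 bookkeeping also has two errors. First, $s<\zeta_u r$ does not give $\eta_u=1$ when $c<s/\zeta_u$. That is row 4 of Table~\ref{T_u V_u eta_opt}, where $T_u(1)\ge D_u/c>D_u/(s+(1-\zeta_u)c)=T_u(\eta^\ast)$. Second, your ``no accumulation'' inequality counts the compression twice. The zero-storage set is $\{\eta_u r\le c,\ \kappa(\eta_u)r\le s\}$, i.e.\ the closure of row 5.

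Here is the missing ingredient, for $0<\zeta_u<1$:
\begin{itemize}
\item For $\eta_u<c/r$ only rows 4 and 5 occur. Row 4 is exactly where $\kappa(\eta_u)r\ge s$, so 4 precedes 5.
\item For $\eta_u\ge c/r$ only rows 1--3 occur. Rows 1--2 are exactly where $\zeta_uc+(1-\eta_u)r\ge s$, so they precede 3, and row 1 is $\eta_u\le\eta^\ast$.
\item At $\eta_u=c/r$ one has $\zeta_uc+(1-\eta_u)r=\kappa(\eta_u)r$, so rows 5 and 1--2 cannot both be nonempty.
\end{itemize}
Hence the sequence along $[0,1]$ is either $4\to5\to3$ or $4\to1\to2\to3$, with pieces possibly empty. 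These are the paper's lines $L_1,L_2,L_4,L_5$ and $L_3,L_6,L_7,L_8$ respectively.

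In the first case, $T_u=D_u/r$ and $V_u=0$ on the closure of row 5, which are the global minima of both. This is where rows 1 and 3 of Table~\ref{T_u V_u eta_opt} place $\eta_u^{\rm opt}$.

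In the second case, $T_u$ is continuous, decreases on rows 4 and 1, and increases on rows 2 and 3, so it is minimized at $\min\{\eta^\ast,1\}$. Meanwhile $V_u$ decreases on row 4, is constant on rows 1--2, and increases on row 3, so the same point minimizes $V_u$. This gives rows 2, 4, 5 and 6 of the table.

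Finally, drop the case $\zeta_u\ge1$. For $\zeta_u>1$ your $\eta_u=0$ is indeed optimal, but it contradicts Table~\ref{T_u V_u eta_opt}. The theorem tacitly assumes $\zeta_u<1$, so that case should be excluded, not proved.
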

\begin{proof}[Proof:\nopunct]
	See Appendix A.
\end{proof}
We note that a proof of Theorem \ref{theo 0} based on mathematical derivations could be referred to the Appendix B of \cite{EIH RW sat-UAV MEC 05}, whereas in Appendix A we provide a more intuitive proof of the theorem based on diagrams.

By substituting $\eta_u$ in problem (P1) with $\eta_u^{\mathrm{opt}}$, we could formulate problem (P2) as:
\begin{subequations} \label{problem 2}
	\begin{align}
		\mbox{(P2)} \min_{\substack{B_{\rm total}, R^S_{\rm total}, \\ F_{\rm total}, V_{\rm total}, \\ \mathbf{B}, \mathbf{R}^S, \mathbf{F}}} & a_1 B_{\rm total} + a_2 R^S_{\rm total} + a_3 F_{\rm total} + a_4 V_{\rm total} \label{problem 2a} \\
		\mathrm{s.t.} \hspace{8mm} & T_u^{\eta-{\rm opt}}(B_u,R_u^S,F_u) \leq T_{\rm req}, \ u \in \mathcal{U}, \label{problem 2b} \\
		& \sum_{u=1}^{U} V_u^{\eta-{\rm opt}}(B_u,R_u^S,F_u) \leq V_{\mathrm{total}}, \label{problem 2c} \\
		\nonumber & \mathrm{(7d), (7e), (7f), (7g), (7i)}.
	\end{align}
\end{subequations}
The relationship between problem (P1) and problem (P2) are given in the following theorem:
% The equivalence between problem (\ref{problem 1}) and problem (\ref{problem 2}) could be easily verified based on theorem \ref{theorem eta_opt}. Specifically, for any problem (\ref{problem 1})'s feasible solution $(B^0_{\rm total}, R^{S,0}_{\rm total}, F^0_{\rm total}, V^0_{\rm total}, \mathbf{B}^0, \mathbf{R}^{S,0}, \mathbf{F}^0, \bm{\eta}^0)$, we could verify that $(B^0_{\rm total}, R^{S,0}_{\rm total}, F^0_{\rm total}, V^0_{\rm total}, \mathbf{B}^0, \mathbf{R}^{S,0}, \mathbf{F}^0, \bm{\eta}^{0,{\rm opt}})$ is also a feasible solution and has the same objective function value. Therefore, replacing the $\bm{\eta}$ in problem (\ref{problem 1}) with $\bm{\eta}^{\rm opt}$ is equivalent problem transformation.

\begin{table*}[t]
	\caption{Optimal data scheduling variable and corresponding optimized function values \cite{EIH RW sat-UAV MEC 05}.}
	\centering
	\renewcommand{\arraystretch}{3.3}
	\begin{tabular}{|m{2.3in}|m{1.05in}|m{1.05in}|m{2.00in}|}
		\hline
		\textbf{Condition} & $\eta_u^{\mathrm{opt}}(B_u,R_u^S,F_u)$ & $T_u^{\eta-\mathrm{opt}}(B_u,R_u^S,F_u)$ & $V_u^{\eta-\mathrm{opt}}(B_u,R_u^S,F_u)$ \\ \hline
		$R_u(B_u) < R_u^S$ & $0$ & $\frac{D_u}{R_u(B_u)}$ & $0$ \\ \hline
		$R_u^S \leq R_u(B_u) < \frac{R_u^S}{\zeta_u}$, $\frac{F_u}{\rho_u} < \frac{R_u(B_u) - R_u^S}{1-\zeta_u}$  & $\frac{F_u}{F_u(1-\zeta_u)+\rho_u R_u^S}$ & $\frac{D_u \rho_u}{F_u(1-\zeta_u)+\rho_u R_u^S}$ & $\frac{D_u}{R_u(B_u)}\left[R_u(B_u) - R_u^S - (1-\zeta_u) \frac{F_u}{\rho_u}\right]$ \\ \hline
		$R_u^S \leq R_u(B_u) < \frac{R_u^S}{\zeta_u}$, $\frac{F_u}{\rho_u} \geq \frac{R_u(B_u) - R_u^S}{1-\zeta_u}$ & $\frac{R_u(B_u) - R_u^S}{(1-\zeta_u) R_u(B_u)}$ & $\frac{D_u}{R_u(B_u)}$ & $0$ \\ \hline
		$R_u(B_u) \geq \frac{R_u^S}{\zeta_u}$, $\frac{F_u}{\rho_u} < \frac{R_u^S}{\zeta_u}$ & $\frac{F_u}{F_u(1-\zeta_u)+\rho_u R_u^S}$ & $\frac{D_u \rho_u}{F_u(1-\zeta_u)+\rho_u R_u^S}$ & $\frac{D_u}{R_u(B_u)}\left[R_u(B_u) - R_u^S - (1-\zeta_u) \frac{F_u}{\rho_u}\right]$ \\ \hline
		$R_u(B_u) \geq \frac{R_u^S}{\zeta_u}$, $\frac{R_u^S}{\zeta_u} \leq \frac{F_u}{\rho_u} < R_u(B_u)$ & $1$ & $\frac{\zeta_u D_u}{R_u^S}$ & $\frac{D_u}{R_u(B_u)}\left[R_u(B_u) - R_u^S - (1-\zeta_u) \frac{F_u}{\rho_u}\right]$ \\ \hline
		$R_u(B_u) \geq \frac{R_u^S}{\zeta_u}$, $\frac{F_u}{\rho_u} \geq R_u(B_u)$ & $1$ & $\frac{\zeta_u D_u}{R_u^S}$ & $\frac{D_u}{R_u(B_u)}(\zeta_u R_u(B_u) - R_u^S)$ \\ \hline
	\end{tabular}
	\label{T_u V_u eta_opt}
\end{table*}

\begin{theorem}
	If
	\begin{equation}
		\nonumber \mathbf{x}_{\rm (P2)}^* = (B_{\rm total}^*, R^{S,*}_{\rm total}, F_{\rm total}^*, V_{\rm total}^*, \mathbf{B}^*, \mathbf{R}^{S,*}, \mathbf{F}^*)
	\end{equation}
	is an optimal solution to problem (P2), then
	\begin{align}
		\nonumber \mathbf{x}_{\rm (P1)}^* = (B_{\rm total}^*, R^{S,*}_{\rm total}, F_{\rm total}^*, & V_{\rm total}^*, \mathbf{B}^*, \mathbf{R}^{S,*}, \mathbf{F}^*, \\
		\nonumber & \{\eta_u^{\mathrm{opt}}(B_u^*,R_u^{S,*},F_u^*)\}_{u \in \mathcal{U}})
	\end{align}
	is an optimal solution to (P1).
	\label{theo 1}
\end{theorem}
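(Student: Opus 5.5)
The plan is a standard two-sided comparison of optimal values, with Theorem~\ref{theo 0} as the only nontrivial ingredient. Fix the EIH location, so that $R_u=R_u(B_u)$. Let $\mathrm{OPT}_1$ and $\mathrm{OPT}_2$ denote the optimal values of (P1) and (P2). The two problems share the same objective, and it does not depend on $\bm{\eta}$. It therefore suffices to prove two facts: (i) $\mathbf{x}_{\rm (P1)}^*$ is feasible for (P1); and (ii) $\mathrm{OPT}_1 \geq \mathrm{OPT}_2$. Fact (i) gives $\mathrm{OPT}_1 \leq \mathrm{OPT}_2$, since $\mathbf{x}_{\rm (P1)}^*$ attains the value $\mathrm{OPT}_2$. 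The two facts together then show that $\mathbf{x}_{\rm (P1)}^*$ attains $\mathrm{OPT}_1$.

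For (i), check the constraints of (P1) one at a time at $\mathbf{x}_{\rm (P1)}^*$. Constraints (7d)--(7g) and (7i) involve only variables shared with (P2), so they hold because $\mathbf{x}_{\rm (P2)}^*$ satisfies them. Constraint (7h) requires $0\le\eta_u^{\rm opt}\le 1$, which we verify row by row from Table~\ref{T_u V_u eta_opt}. The values $0$ and $1$ are immediate. The expression $F_u/(F_u(1-\zeta_u)+\rho_u R_u^S)$ lies in $[0,1]$ whenever $\zeta_u\le 1$, $F_u\geq 0$ and $R_u^S\ge 0$. The expression $(R_u-R_u^S)/((1-\zeta_u)R_u)$ lies in $[0,1]$ precisely under the row condition $R_u^S\le R_u< R_u^S/\zeta_u$.

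Some degenerate cases, such as $F_u=R_u^S=0$, need separate care. I would note that, since $\eta_u^{\rm opt}$ is asserted to exist by Theorem~\ref{theo 0}, it lies in the feasible range $[0,1]$ by the statement of that theorem itself. This removes the need for the row-by-row check.

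Next, by definition, $T_u(B_u^*,R_u^{S,*},F_u^*,\eta_u^{\rm opt}) = T_u^{\eta-\rm opt}(B_u^*,R_u^{S,*},F_u^*) \le T_{\rm req}$ by (8b), and similarly $\sum_u V_u(\cdot,\eta_u^{\rm opt}) = \sum_u V_u^{\eta-\rm opt}\le V_{\rm total}^*$ by (8c). This gives (7b) and (7c).

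For (ii), take an arbitrary feasible point of (P1), $(B_{\rm total},R^S_{\rm total},F_{\rm total},V_{\rm total},\mathbf{B},\mathbf{R}^S,\mathbf{F},\bm{\eta})$. Drop $\bm{\eta}$ and show that the remaining tuple is feasible for (P2). Inequality (\ref{eta_opt T ineq}) of Theorem~\ref{theo 0} gives $T_u^{\eta-\rm opt}\le T_u(\cdot,\eta_u)\le T_{\rm req}$. Inequality (\ref{eta_opt V ineq}) gives $\sum_u V_u^{\eta-\rm opt}\le \sum_u V_u(\cdot,\eta_u)\le V_{\rm total}$. The remaining constraints are shared by the two problems. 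Hence every (P1)-feasible objective value is at least $\mathrm{OPT}_2$, which is (ii).

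The main obstacle is that the single $\eta_u^{\rm opt}$ must minimize \emph{both} $T_u$ and $V_u$ simultaneously. That simultaneous minimization is what lets one choice of $\bm{\eta}$ serve both the latency and the storage constraints. Theorem~\ref{theo 0} supplies exactly this, so the argument reduces to bookkeeping. The only remaining care is to ensure that $\eta_u^{\rm opt}$ is well-defined at boundary or degenerate resource values, which I would handle by appealing directly to the existence claim of Theorem~\ref{theo 0}.
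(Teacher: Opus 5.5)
Your proof is correct and is essentially the paper's argument. The paper phrases it as a contradiction, but its core step is exactly your part (ii): strip $\bm{\eta}$ from an arbitrary (P1)-feasible point and invoke (\ref{eta_opt T ineq})--(\ref{eta_opt V ineq}) to get a (P2)-feasible point of equal cost. Your part (i) makes explicit the feasibility of $\mathbf{x}_{\rm (P1)}^*$, which the paper's contradiction tacitly assumes.

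One aside is misstated, though you end up not relying on it. The bound $F_u/(F_u(1-\zeta_u)+\rho_u R_u^S)\le 1$ requires $\zeta_u F_u/\rho_u\le R_u^S$, not merely $\zeta_u\le 1$ and nonnegativity. It is the row conditions of rows 2 and 4 of Table~\ref{T_u V_u eta_opt} that supply this bound.
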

\begin{proof}[Proof:\nopunct]
	We adopt the method of proof by contradiction. Assume that $\mathbf{x}_{\rm (P2)}^*$ is an optimal solution while $\mathbf{x}_{\rm (P1)}^*$ is not optimal. This means that some feasible solution to problem (P1), presumably denoted as $\mathbf{x}_{\rm (P1)}^{(1)} = (B_{\rm total}^{(1)}, R^{S,(1)}_{\rm total}, F_{\rm total}^{(1)},\\ V_{\rm total}^{(1)}, \mathbf{B}^{(1)}, \mathbf{R}^{S,{(1)}}, \mathbf{F}^{(1)}, \bm{\eta}^{(1)})$, could achieve lower objective function value compared with $\mathbf{x}_{\rm (P1)}^*$. Specifically, we have
	\begin{align}
		\nonumber & a_1 B_{\rm total}^{(1)} + a_2 R^{S,(1)}_{\rm total} + a_3 F_{\rm total}^{(1)} + a_4 V_{\rm total}^{(1)} \\
		& \hspace{10mm} < a_1 B_{\rm total}^* + a_2 R^{S,*}_{\rm total} + a_3 F_{\rm total}^* + a_4 V_{\rm total}^*. \label{theo 1 ineq}
	\end{align}
	
	Based on (\ref{eta_opt T ineq}) and (\ref{eta_opt V ineq}), $\mathbf{x}_{\rm (P1)}^{(2)} = (B_{\rm total}^{(1)}, R^{S,(1)}_{\rm total}, F_{\rm total}^{(1)}, V_{\rm total}^{(1)},\\ \mathbf{B}^{(1)}, \mathbf{R}^{S,{(1)}}, \mathbf{F}^{(1)}, \{\eta_u^{\mathrm{opt}}(B_u^{(1)},R_u^{S,(1)},F_u^{(1)})\}_{u \in \mathcal{U}})$ is also a feasible solution to (P1) and achieves the same objective function value as $\mathbf{x}_{\rm (P1)}^{(1)}$. Therefore, $\mathbf{x}_{\rm (P2)}^{(1)} = (B_{\rm total}^{(1)}, R^{S,(1)}_{\rm total},\\ F_{\rm total}^{(1)}, V_{\rm total}^{(1)}, \mathbf{B}^{(1)}, \mathbf{R}^{S,{(1)}}, \mathbf{F}^{(1)})$ is a feasible solution to (P2). Since (\ref{theo 1 ineq}) holds, $\mathbf{x}_{\rm (P2)}^{(1)}$ could achieve lower objective function value compared with $\mathbf{x}_{\rm (P2)}^*$. This contradicts the assumption of $\mathbf{x}_{\rm (P2)}^*$ being optimal, which proves the theorem.
\end{proof}

Theorem \ref{theo 1} suggests that we could obtain the optimal solution to (P1) from the optimal solution to (P2). However, (P2) is still difficult to solve since $T_u^{\eta-{\rm opt}}(B_u,R_u^S,F_u)$ and $V_u^{\eta-{\rm opt}}(B_u,R_u^S,F_u)$ are also complex piecewise functions.

To resolve this issue, we formulate the following problem by adding constraints (\ref{problem 3g}) and (\ref{problem 3h}) to (P2) as:
\begin{subequations} \label{problem 3}
	\begin{align}
		\mbox{(P3)} \min_{\substack{B_{\rm total}, R^S_{\rm total}, \\ F_{\rm total}, V_{\rm total}, \\ \mathbf{B}, \mathbf{R}^S, \mathbf{F}}} & a_1 B_{\rm total} + a_2 R^S_{\rm total} + a_3 F_{\rm total} + a_4 V_{\rm total} \label{problem 3a} \\
		\mathrm{s.t.} \hspace{5mm} & R_u^S \leq R_u(B_u) \leq \frac{R_u^S}{\zeta_u}, \ u \in \mathcal{U}, \label{problem 3g} \\
		& F_u = \rho_u \frac{R_u(B_u)-R_u^S}{1-\zeta_u}, \ u \in \mathcal{U}, \label{problem 3h} \\
		\nonumber & \mathrm{(10b), (10c), (7d), (7e), (7f), (7g), (7i).}
	\end{align}
\end{subequations}
The relationship between problem (P2) and problem (P3) are given in the following theorem:

\begin{theorem}
	If $\mathbf{x}^* = (B_{\rm total}^*, R^{S,*}_{\rm total}, F_{\rm total}^*, V_{\rm total}^*, \mathbf{B}^*, \mathbf{R}^{S,*},\\ \mathbf{F}^*)$ is an optimal solution to problem (P3), then $\mathbf{x}^*$ is an optimal solution to (P2).
	\label{theo 2}
\end{theorem}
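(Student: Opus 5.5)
The plan is to compare the feasible sets of the two problems. First, (P3) is (P2) with the extra constraints (\ref{problem 3g}) and (\ref{problem 3h}) added, so every feasible point of (P3) is feasible for (P2), and the optimal value of (P3) is at least that of (P2). It therefore suffices to show the reverse: from any feasible point of (P2) I will build a feasible point of (P3) whose objective value is no larger. Then the two optimal values coincide, and an optimal $\mathbf{x}^*$ of (P3), being feasible for (P2), is optimal for (P2).

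Throughout I assume $0<\zeta_u<1$ and $a_1,\dots,a_4\geq 0$. I also use that $R_u(B_u)$ in (\ref{ergodic rate}) is linear in $B_u$ with a positive coefficient, so it is continuous and strictly increasing with $R_u(0)=0$. Any intermediate rate $R'\le R_u(B_u)$ is thus attained by some $B_u'\le B_u$.

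Given a feasible $(B_u,R_u^S,F_u)$ of (P2), I modify each user separately according to the row of Table \ref{T_u V_u eta_opt} it falls in. In every case I keep $T_u^{\eta-\mathrm{opt}}$ unchanged, make $V_u^{\eta-\mathrm{opt}}$ equal to $0$, and never increase $B_u$, $R_u^S$ or $F_u$.
\begin{itemize}
\item \emph{Row 1} ($R_u<R_u^S$). Lower $R_u^S$ to $R_u(B_u)$ and set $F_u=0$. The point now lies in row 3 on its boundary, with $T=D_u/R_u$ unchanged and $V=0$.
\item \emph{Row 3}. Lower $F_u$ to $\rho_u(R_u-R_u^S)/(1-\zeta_u)$. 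The point stays in row 3, with $T=D_u/R_u$ and $V=0$.
\item \emph{Rows 2 and 4} ($F_u/\rho_u$ below the threshold). Lower $B_u$ to $B_u'$ such that
\[
R_u(B_u')=R_u^S+(1-\zeta_u)\frac{F_u}{\rho_u}.
\]
In row 2 this is less than $R_u$ by the row condition. In row 4 it is less than $R_u^S/\zeta_u\le R_u$ because $F_u/\rho_u<R_u^S/\zeta_u$. The new point satisfies (\ref{problem 3g}) and (\ref{problem 3h}) and lies in row 3. Its latency is
\[
\frac{D_u}{R_u(B_u')}=\frac{D_u\rho_u}{F_u(1-\zeta_u)+\rho_u R_u^S},
\]
which equals the old latency, and its storage is $0$.
\item \emph{Rows 5 and 6} ($F_u/\rho_u\ge R_u^S/\zeta_u$). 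Lower $B_u$ so that $R_u(B_u')=R_u^S/\zeta_u$, and lower $F_u$ to $\rho_uR_u^S/\zeta_u$, which equals $\rho_u(R_u(B_u')-R_u^S)/(1-\zeta_u)$. This is row 6 with $F_u/\rho_u=R_u(B_u')$. Its latency is $\zeta_uD_u/R_u^S$, as before, and its storage is $D_u(\zeta_u R_u'-R_u^S)/R_u'=0$.
\end{itemize}

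After these modifications, (\ref{problem 2b}) still holds and $\sum_u V_u^{\eta-\mathrm{opt}}$ has not increased, so (\ref{problem 2c}) holds with the same $V_{\rm total}$. The sums of $B_u$, $R_u^S$ and $F_u$ have not increased, so the same totals still satisfy (7d)--(7f), and nonnegativity is preserved. The totals are unchanged, so the point is feasible for (P3) with the same objective value, which completes the argument.

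The main obstacle is the careful per-row bookkeeping. For each case I must check which row of Table \ref{T_u V_u eta_opt} the modified point lands in, including boundary cases such as $R_u=R_u^S$ or $R_u=R_u^S/\zeta_u$, which are admissible in (\ref{problem 3g}). I must also confirm that the latency formula of that row gives exactly the old value. Row 2 and rows 5--6 are the delicate cases, because there $B_u$ itself must shrink rather than only $F_u$ or $R_u^S$.
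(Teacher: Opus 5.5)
Your proposal is correct and takes the same route as the paper. Appendix B likewise maps each user's $(B_u,R_u^S,F_u)$ case by case via Table~\ref{B R_S F 2}, using exactly your modifications (rows 5 and 6 merged, as you do), and keeps the totals fixed to get a (P3)-feasible point with the same objective. The paper phrases this as a contradiction rather than as equality of optimal values, and your explicit check of boundary cases (e.g., $R_u(B_u')=R_u^S/\zeta_u$ landing in row 6 of Table~\ref{T_u V_u eta_opt}) is slightly more careful than the paper's.
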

\begin{proof}[Proof:\nopunct]
	See Appendix B.
\end{proof}

Theorem \ref{theo 2} suggests that we could obtain an optimal solution to (P2) by deriving from an optimal solution to (P3). Moreover, since (P3) reduces the feasible region compared with (P2) by adding constraints (\ref{problem 3g})-(\ref{problem 3h}), functions $T_u^{\eta-\mathrm{opt}}$ and $V_u^{\eta-\mathrm{opt}}$ take a certain piece (row 3 of Table \ref{T_u V_u eta_opt}). Therefore, (P3) could be equivalently rewritten as:

\begin{subequations} \label{problem 4}
	\begin{align}
		\mbox{(P4)} \min_{\substack{B_{\rm total}, R^S_{\rm total}, \\ F_{\rm total}, \mathbf{B}, \mathbf{R}^S, \mathbf{F}}} & a_1 B_{\rm total} + a_2 R^S_{\rm total} + a_3 F_{\rm total} + a_4 V_{\rm total} \label{problem 4a} \\
		\mathrm{s.t.} \hspace{5mm} & \frac{D_u}{R_u(B_u)} \leq T_{\rm req}, \ u \in \mathcal{U}, \label{problem 4b} \\
		& B_{\rm total}, R^S_{\rm total}, F_{\rm total} \geq 0, \label{problem 4h} \\
		\nonumber & \mathrm{(12b), (12c), (7d), (7e), (7f), (7g).}
	\end{align}
\end{subequations}
We note that within the feasible region constrained by (\ref{problem 3g})-(\ref{problem 3h}), the function value of $V_u^{\eta-\mathrm{opt}}$ equals to 0. This suggests that in the optimal solution the communication and computing data flow would reach a balance, and no data is accumulated at the EIH. The required storage resource $V_{\rm total}$ is thus 0 in this case.

However, this problem is non-convex and therefore cannot be solved directly. Since the variables are strongly coupled in the constraints, successive convex approximation cannot be applied to obtain an approximate optimal solution.

To solve (P4), we propose the following theorem:
\begin{theorem}
	Assuming that $R_u(B_u)$ is strictly increasing with $B_u$, there exists at least one optimal solution $(B_{\rm total}^*, R^{S,*}_{\rm total}, F^*_{\rm total}, \mathbf{B}^*, \mathbf{R}^{S,*}, \mathbf{F}^*)$ to (P4) that satisfies
	\begin{align}
		& B_{\rm total}^* = \sum_{u=1}^{U} R_u^{-1}(\frac{D_u}{T_{\rm req}}), \label{B_total opt} \\
		& B_u^* = R_u^{-1}(\frac{D_u}{T_{\rm req}}),\ u \in \mathcal{U} \label{B_u opt}.
	\end{align}
	\label{theorem B opt}
\end{theorem}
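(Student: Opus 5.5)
The approach is an exchange argument: start from an arbitrary optimal solution of (P4) and modify it user by user so that every $B_u$ is pushed down to its smallest admissible value. I will show that feasibility is preserved and the objective does not increase. Throughout I assume, as is implicit in the formulation, that the weights $a_1,a_2,a_3$ are nonnegative, $0<\zeta_u<1$, and (P4) admits an optimal solution.

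\textbf{Lower bound on each $B_u$.} Take any optimal $(B_{\rm total}, R^S_{\rm total}, F_{\rm total}, \mathbf{B}, \mathbf{R}^S, \mathbf{F})$ of (P4). Since $R_u(\cdot)$ is strictly increasing, it is invertible on its range. Constraint (\ref{problem 4b}) is equivalent to $R_u(B_u)\ge D_u/T_{\rm req}$, that is,
\[
B_u \ge \bar B_u \triangleq R_u^{-1}\!\left(\frac{D_u}{T_{\rm req}}\right).
\]
Write $r_u = R_u(B_u)$ and $\bar r_u = D_u/T_{\rm req}$, so that $\bar r_u \le r_u$.

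\textbf{Modification for each user.} Set $B_u' = \bar B_u$, $R_u^{S\prime} = \min\{R_u^S, \bar r_u\}$ and $F_u' = \rho_u(\bar r_u - R_u^{S\prime})/(1-\zeta_u)$. I check (\ref{problem 3g}), (\ref{problem 3h}) and (\ref{problem 4b}) in two cases.
\begin{itemize}
\item If $R_u^S \le \bar r_u$, then $R_u^{S\prime}=R_u^S\le \bar r_u$. Moreover $\bar r_u \le r_u \le R_u^S/\zeta_u$ by the original (\ref{problem 3g}). Also $F_u' \le \rho_u(r_u-R_u^S)/(1-\zeta_u) = F_u$.
\item If $R_u^S > \bar r_u$, then $R_u^{S\prime}=\bar r_u < R_u^S$ and $F_u'=0\le F_u$. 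Condition (\ref{problem 3g}) reads $\bar r_u \le \bar r_u \le \bar r_u/\zeta_u$, which holds because $\zeta_u<1$.
\end{itemize}
In both cases (\ref{problem 3h}) holds by construction. Constraint (\ref{problem 4b}) holds with equality, and all new variables are nonnegative.

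\textbf{Totals and conclusion.} Each of $B_u$, $R_u^S$, $F_u$ has weakly decreased, so the old totals still satisfy (7d)--(7f). I would then replace the totals by
\[
B_{\rm total}'=\sum_u \bar B_u,\quad R^{S\prime}_{\rm total}=\sum_u R_u^{S\prime},\quad F_{\rm total}'=\sum_u F_u'.
\]
Each of these is no larger than the corresponding old total. With nonnegative weights the objective does not increase, and $V_{\rm total}=0$ is unchanged. Hence the modified point is feasible with cost no greater than an optimal cost, so it is itself optimal, and it satisfies (\ref{B_total opt}) and (\ref{B_u opt}).

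The main obstacle I expect is the coupling between $B_u$ and the other variables through (\ref{problem 3g})--(\ref{problem 3h}). Simply lowering $B_u$ with $R_u^S$ fixed can violate $R_u^S\le R_u(B_u)$. The truncation $R_u^{S\prime}=\min\{R_u^S,\bar r_u\}$ is the device that handles this, and the only inequality that needs care is $\bar r_u\le R_u^{S}/\zeta_u$ in the first case, which is inherited from $r_u\le R_u^S/\zeta_u$.
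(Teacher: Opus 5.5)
Your proof is correct. Its skeleton matches the paper's Appendix~C: an exchange argument that pins every $B_u$ at $R_u^{-1}(D_u/T_{\rm req})$, then shrinks $R_u^S$ and $F_u$ so that (\ref{problem 3g})--(\ref{problem 3h}) still hold with every coordinate weakly smaller.

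The difference is in how you repair $R_u^S$ and $F_u$.
\begin{itemize}
\item The paper does not clamp. It multiplies both $R_u^S$ and $F_u$ by the single factor $\lambda_u = R_u(B_u^*)/R_u(B_u^{(1)})\le 1$. The constraints (\ref{problem 3g}) and (\ref{problem 3h}) are positively homogeneous in $(R_u(B_u),R_u^S,F_u)$, so they define a cone. The scaled point therefore stays feasible with no case analysis, which also shows the structural reason the reduction works.
\item Your truncation $R_u^{S\prime}=\min\{R_u^S,\bar r_u\}$ costs a two-case check, but it changes $R_u^S$ only when forced.
\end{itemize}
Both constructions land on the segment $\zeta_u\bar r_u\le R_u^S\le\bar r_u$, $F_u=\rho_u(\bar r_u-R_u^S)/(1-\zeta_u)$, but at different points. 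Yours always has weakly larger $R_u^S$ and weakly smaller $F_u$ than the paper's. Both points are dominated by the original, so either suffices.

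Two smaller differences:
\begin{itemize}
\item Reading $B_u\ge\bar B_u$ directly off (\ref{problem 4b}) replaces the paper's pigeonhole step and its split into $B_{\rm total}<B_{\rm total}^*$ versus $B_{\rm total}>B_{\rm total}^*$.
\item Your map never uses optimality. You can apply it to an arbitrary feasible point, as the paper does, and then drop your standing assumption that (P4) attains its minimum. The reduced problem (P5) is a feasible linear program whose objective is bounded below, so it attains its optimum, and that optimum is also optimal for (P4).
\end{itemize}
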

\begin{proof}[Proof:\nopunct]
	See Appendix C.
\end{proof}
Based on this theorem, we could further equivalently transform (P4) into the following problem:
\begin{subequations} \label{problem 5}
	\begin{align}
		\mbox{(P5)} \min_{R^S_{\rm total}, F_{\rm total}, \mathbf{R}^S} & a_2 R^S_{\rm total} + a_3 F_{\rm total} + a_1 B_{\rm total}^* \label{problem 5a} \\
		\mathrm{s.t.} \hspace{6mm} & \sum_{u=1}^{U} R_u^S \leq R^S_{\mathrm{total}}, \label{problem 5b} \\
		& \sum_{u=1}^{U} \rho_u \frac{D_u/T_{\rm req}-R_u^S}{1-\zeta_u} \leq F_{\mathrm{total}}, \label{problem 5c} \\
		& \zeta_u \frac{D_u}{T_{\rm req}} \leq R_u^S \leq \frac{D_u}{T_{\rm req}}, \ u \in \mathcal{U}, \label{problem 5d} \\
		& R^S_{\rm total}, F_{\rm total} \geq 0. \label{problem 5e}
	\end{align}
\end{subequations}

This optimization problem is a linear programming problem. The optimal solution to this problem is given in the following theorem
\begin{theorem}
	The optimal solution to (P5) is defined as $(R^{S,*}_{\rm total}, F^*_{\rm total}, \mathbf{R}^{S,*})$, the expressions of which are given as follows.
	\begin{align}
		& R_u^{S,*} = \frac{D_u}{T_{\rm req}} \left[ 1-(1-\zeta_u) \cdot {\rm I}\left( \frac{\rho_u}{1-\zeta_u} \leq \frac{a_2}{a_3} \right) \right], \label{R^S opt} \\
		& R_{\rm total}^{S,*} = \sum_{u=1}^{U} \frac{D_u}{T_{\rm req}} \left[ 1-(1-\zeta_u) \cdot {\rm I}\left( \frac{\rho_u}{1-\zeta_u} \leq \frac{a_2}{a_3} \right) \right], \label{R^S_total opt} \\
		& F_{\rm total}^* = \sum_{u=1}^{U} \rho_u \frac{D_u}{T_{\rm req}} \cdot {\rm I}\left( \frac{\rho_u}{1-\zeta_u} \leq \frac{a_2}{a_3} \right), \label{F_total opt}
	\end{align}
	where ${\rm I}(\cdot)$ is an indicator function, which equals to 1 when the statement holds, and 0 if it does not hold.
	\label{theorem R^S F opt}
\end{theorem}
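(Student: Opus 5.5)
Since (P5) is a linear program whose per-user constraints are decoupled, I would first eliminate the two aggregate variables and then solve one scalar problem for each user. With $a_2,a_3>0$, constraints (\ref{problem 5b}) and (\ref{problem 5c}) must be tight at any optimum. If either were slack, we could decrease $R^S_{\rm total}$ or $F_{\rm total}$ without violating anything and strictly lower the objective. Note also that the right-hand sides are automatically nonnegative because of (\ref{problem 5d}), so (\ref{problem 5e}) is inactive. Substituting $R^S_{\rm total}=\sum_u R_u^S$ and $F_{\rm total}=\sum_u \rho_u (D_u/T_{\rm req}-R_u^S)/(1-\zeta_u)$ then reduces (P5) to
\begin{equation}
\nonumber \min_{\mathbf{R}^S} \sum_{u=1}^{U} \left[ \left(a_2 - \frac{a_3\rho_u}{1-\zeta_u}\right) R_u^S + \frac{a_3 \rho_u D_u}{(1-\zeta_u)T_{\rm req}} \right] + a_1 B^*_{\rm total}
\end{equation}
subject only to the box constraints (\ref{problem 5d}).

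This problem separates across users. Each $R_u^S$ appears linearly with coefficient $c_u = a_2 - a_3\rho_u/(1-\zeta_u)$ and ranges over the interval $[\zeta_u D_u/T_{\rm req},\, D_u/T_{\rm req}]$. A linear function on an interval is minimized at an endpoint. If $c_u>0$, equivalently $\rho_u/(1-\zeta_u) < a_2/a_3$, the minimizer is the lower endpoint $R_u^{S,*}=\zeta_u D_u/T_{\rm req}$. If $c_u<0$, it is the upper endpoint $D_u/T_{\rm req}$. If $c_u=0$, every point of the interval is optimal, so I may choose the lower endpoint; this matches the non-strict inequality in the indicator, and I would note that in this tie case the stated solution is one optimal solution but not the only one. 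These two cases combine into the single expression (\ref{R^S opt}), because $1-(1-\zeta_u)=\zeta_u$.

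It remains to recover the aggregates. Summing (\ref{R^S opt}) over $u$ gives (\ref{R^S_total opt}). Substituting (\ref{R^S opt}) into the tight version of (\ref{problem 5c}) gives $F_{\rm total}^*$. When the indicator equals 1, the summand is $\rho_u(D_u/T_{\rm req})(1-\zeta_u)/(1-\zeta_u)=\rho_u D_u/T_{\rm req}$. When it equals 0, the summand is 0. This yields (\ref{F_total opt}).

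None of these steps is hard. The parts that need explicit care are the assumptions: positivity of the weights, which is needed for tightness of the aggregate constraints; $\zeta_u<1$, so the coefficient is well defined; and the nonuniqueness in the tie case $c_u=0$.
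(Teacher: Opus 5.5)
Your proposal is correct and follows essentially the same route as the paper's proof. Both arguments tighten (\ref{problem 5b}) and (\ref{problem 5c}), reduce to the separable box-constrained linear program (P10), and pick the lower or upper endpoint of each interval according to the sign of $a_2 - a_3\rho_u/(1-\zeta_u)$, assigning the tie case to the lower endpoint. Your version is slightly more careful than the paper's: your constant term correctly carries the factor $a_3$ (the paper's constant omits it), you check that (\ref{problem 5e}) remains satisfied when tightening, and you note that the solution is not unique in the tie case.
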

\begin{proof}[Proof:\nopunct]
	See Appendix D.
\end{proof}
We could see that the optimal configuration of to-satellite data rate $R_{\rm total}^{S,*}$ and computing frequency $F_{\rm total}^*$ is determined by a threshold $\frac{\rho_u}{1-\zeta_u} \leq \frac{a_2}{a_3}$, where $\frac{\rho_u}{1-\zeta_u}$ is the number of computing CPU cycles required for each bit of computing output. If the threshold holds, it means that the cost-effectiveness ratio of computing is relatively low, and thus sensor $u$'s data are computed before uploading to the satellite. If it does not hold, sensor $u$'s data are directly uploaded to the satellite and the required computing frequency is 0.

\subsection{Optimal location of the EIH}

In the previous subsection, we acquire the closed-formed expressions of the optimal resource configuration (\ref{B_total opt}), (\ref{R^S_total opt}) and (\ref{F_total opt}), assuming that the location of the EIH-enabled UAV $(h_x,h_y)$ is given and fixed. According to these expressions, we note that the EIH's location only affects the EIH-user communication resource configuration. Therefore, the optimization problem of the EIH location could be formulated as:
\begin{subequations} \label{problem 6}
	\begin{align}
		\mbox{(P6)} \min_{h_x,h_y,\mathbf{B}} \hspace{1mm}  & a_1 \sum_{u=1}^{U} B_u + a_2 R^{S,*}_{\rm total} + a_3 F^*_{\rm total} \label{problem 6a} \\
		\nonumber \mathrm{s.t.} \hspace{2mm} & B_u \mathbb{E} \left[ \log_2\left(1+\frac{p_u|s_u|^2 l_u(h_x,h_y)^2}{\sigma^2}\right) \right] \\
		& \hspace{35mm} = \frac{D_u}{T_{\rm req}},\ u \in \mathcal{U}, \label{problem 6b}
	\end{align}
\end{subequations}
where the expression of $l_u(h_x,h_y)$ could refer to (\ref{large-scale channel})-(\ref{elevation angle}). With the expectation operators in constraint (\ref{problem 6b}), this problem could be difficult to solve. Referring to \cite{ergodic rate}, we propose a rather accurate approximation method based on the random matrix theory
\begin{equation}
	\tilde{R}_u = \frac{B_u}{\log 2} \cdot \left[\log\left(1+\frac{p_u l_u^2}{\sigma^2 \nu_u}\right) - \frac{p_u l_u^2}{p_u l_u^2+\sigma^2 \nu_u} + \log\nu_u \right],
\end{equation}
where the introduced variable $\nu_u$ satisfies $\nu_u \geq 1$ and
\begin{equation}
	1-\frac{1}{\nu_u}=\frac{p_u l_u^2}{p_u l_u^2+\sigma^2 \nu_u}.
\end{equation}
Through this approximation method, we recast optimization problem (P6) as:
\begin{subequations} \label{problem 7}
	\begin{align}
		\mbox{(P7)} \min_{h_x,h_y,\mathbf{B},\bm{\nu}} \hspace{1mm}  & a_1 \sum_{u=1}^{U} B_u + a_2 R^{S,*}_{\rm total} + a_3 F^*_{\rm total} \label{problem 7a} \\
		\nonumber \mathrm{s.t.} \hspace{3mm} & \log\left(1+\frac{p_u l_u^2}{\sigma^2 \nu_u}\right) + \log\nu_u - \frac{p_u l_u^2}{p_u l_u^2+\sigma^2 \nu_u}\\
		& \hspace{26mm} = \frac{D_u\log2}{T_{\rm req} B_u},\ u \in \mathcal{U}, \label{problem 7b} \\
		& 1-\frac{1}{\nu_u}=\frac{p_u l_u^2}{p_u l_u^2+\sigma^2 \nu_u},\ u \in \mathcal{U}, \label{problem 7c} \\
		& \nu_u \geq 1,\ u \in \mathcal{U}. \label{problem 7d}
	\end{align}
\end{subequations}
This problem is still complicated and cannot be solved directly. By introducing slack variables $\mathbf{Q}=\{Q_u, u \in \mathcal{U}\}$, $\mathbf{Pr}=\{Pr_u, u \in \mathcal{U}\}$ and $\bm{\theta}=\{\theta_u, u \in \mathcal{U}\}$, a new optimization is proposed as:
\begin{subequations} \label{problem 8}
	\begin{align}
		\mbox{(P8)} \min_{\substack{h_x, h_y, \\ \mathbf{B}, \bm{\nu}, \\ \mathbf{Q}, \mathbf{Pr}, \bm{\theta}}} \hspace{1mm}  & a_1 \sum_{u=1}^{U} B_u + a_2 R^{S,*}_{\rm total} + a_3 F^*_{\rm total} \label{problem 8a} \\
		\mathrm{s.t.} \hspace{3mm} & 2\log\nu_u+\frac{1}{\nu_u}-1 \geq \frac{D_u\log2}{T_{\rm req} B_u},\ u \in \mathcal{U}, \label{problem 8b} \\
		\nonumber & \log[\nu_u(\nu_u-1)] \leq \frac{\log 10}{10} (\eta_{\rm NLoS}-\eta_{\rm LoS}) Pr_u \\
		& \hspace{27mm} -\log Q_u + C,\ u \in \mathcal{U}, \label{problem 8c} \\
		\nonumber & Q_u \geq (h_x-h_{u,x})^2+(h_y-h_{u,y})^2 \\
		& \hspace{38mm} +H^2 ,\ u \in \mathcal{U}, \label{problem 8d} \\
		& \frac{1}{Pr_u} \geq 1+a \cdot {\rm e}^{-b(\theta_u-a)},\ u \in \mathcal{U}, \label{problem 8e} \\
		& \sin \theta_u \leq \frac{H}{\sqrt{Q_u}},\ u \in \mathcal{U}, \label{problem 8f} \\
		& \nu_u \geq 1,\ u \in \mathcal{U}, \label{problem 8g} \\
		& 0 \leq Pr_u \leq 1,\ u \in \mathcal{U}, \label{problem 8h} \\
		& 0 \leq \theta_u \leq \pi/2,\ u \in \mathcal{U}, \label{problem 8i}
	\end{align}
\end{subequations}
where $C = \log(p_u/\sigma^2)+2\log(c/4\pi f)-(\log 10/10) \cdot \eta_{\rm NLoS}$. The relationship between problems (P7) and (P8) is given in the following theorem
\begin{theorem}
	Problems (P7) and (P8) are equivalent.
	\label{theo placement prob equiv}
\end{theorem}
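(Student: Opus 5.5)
The plan is to show that every feasible point of (P7) gives a feasible point of (P8) with the same objective value, and that every feasible point of (P8) can be turned into a feasible point of (P7) whose objective is no larger. Together these imply that the two problems have the same optimal value and that their optimal solutions correspond. Both directions only use the fact that the objective depends on the decision variables through $\sum_u B_u$ alone, since $R^{S,*}_{\rm total}$ and $F^*_{\rm total}$ are constants.

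\emph{Step 1 (algebraic simplification of (P7)).} Write $x_u = p_u l_u^2/\sigma^2$. Clearing denominators in (\ref{problem 7c}) gives $(\nu_u-1)(x_u+\nu_u)=x_u\nu_u$, which simplifies to $\nu_u(\nu_u-1)=x_u$. Substituting this back:
\begin{itemize}
\item $1+x_u/\nu_u=\nu_u$;
\item $x_u/(x_u+\nu_u)=1-1/\nu_u$.
\end{itemize}
Hence (\ref{problem 7b}) becomes $2\log\nu_u+1/\nu_u-1=D_u\log2/(T_{\rm req}B_u)$, which is (\ref{problem 8b}) holding with equality.

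Next I take logarithms in (\ref{large-scale channel}), using $d_u^2=Q_u$ and the elevation-dependent term $P(\theta_u)=1/(1+a e^{-b(\theta_u-a)})$. This gives
\[
\log x_u = C-\log Q_u+\frac{\log 10}{10}(\eta_{\rm NLoS}-\eta_{\rm LoS})P(\theta_u).
\]
So (\ref{problem 7c}) is exactly (\ref{problem 8c}) with equality, once we set $Q_u=d_u^2$, $Pr_u=P(\theta_u)$, and $\sin\theta_u=H/\sqrt{Q_u}$. It follows that any (P7)-feasible point, augmented with these slack values, satisfies (\ref{problem 8b})--(\ref{problem 8i}) with equality and has the same objective. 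This proves the direction (P8) optimal value $\leq$ (P7) optimal value.

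\emph{Step 2 (monotonicity facts).} I will record the following monotonicity properties.
\begin{itemize}
\item $g(\nu)=2\log\nu+1/\nu-1$ is strictly increasing on $\nu\ge1$, since $g'(\nu)=(2\nu-1)/\nu^2>0$.
\item $\nu(\nu-1)$ is strictly increasing on $\nu\ge1$.
\item The right-hand side of (\ref{problem 8c}) is decreasing in $Q_u$, and increasing in $Pr_u$ under the standard assumption $\eta_{\rm NLoS}>\eta_{\rm LoS}$.
\item $P(\theta)$ is increasing in $\theta$ for $b>0$.
\item On $[0,\pi/2]$, the constraint (\ref{problem 8f}) is equivalent to $\theta_u\le\arcsin(H/\sqrt{Q_u})$.
\end{itemize}

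\emph{Step 3 (converse direction).} Take any (P8)-feasible point and form the true quantities $\bar Q_u=d_u^2(h_x,h_y)\le Q_u$, $\bar\theta_u$ (the true elevation angle), and $\bar x_u$ (the true $p_u l_u^2/\sigma^2$). Chaining the inequalities (\ref{problem 8c})--(\ref{problem 8f}) with the monotonicity facts of Step 2 gives $\nu_u(\nu_u-1)\le\bar x_u$.

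Let $\bar\nu_u\ge1$ be the root of $\bar\nu_u(\bar\nu_u-1)=\bar x_u$. Monotonicity gives $\bar\nu_u\ge\nu_u$, so $g(\bar\nu_u)\ge g(\nu_u)\ge D_u\log2/(T_{\rm req}B_u)$. Now define $\bar B_u=D_u\log2/(T_{\rm req}\,g(\bar\nu_u))\le B_u$. Then $(h_x,h_y,\bar{\mathbf B},\bar{\bm\nu})$ satisfies (\ref{problem 7b})--(\ref{problem 7d}) with objective no larger than that of the original (P8) point. Hence (P7) optimal value $\le$ (P8) optimal value, and optimal solutions map to each other.

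The main obstacle is Step 3: the chain of relaxed inequalities must all point in the "better channel" direction. This hinges on the sign $\eta_{\rm NLoS}>\eta_{\rm LoS}$, on $b>0$, and on restricting $\theta_u$ to $[0,\pi/2]$ so that sine is monotone. I would state these as standing assumptions of the channel model.
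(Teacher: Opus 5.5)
Your proof is correct and follows the paper's route. You use the same reduction of (P7) via $\nu_u(\nu_u-1)=p_u l_u^2/\sigma^2$ and the logarithm of the channel model (the paper's intermediate problem (P11)), and the same monotonicity chain showing that the relaxed constraints (\ref{problem 8b})--(\ref{problem 8f}) can be made tight. The only difference is packaging: you map an arbitrary (P8)-feasible point to a (P7)-feasible point with no larger cost, whereas the paper argues that each of these constraints must bind at an optimum, so your version avoids presupposing that an optimum of (P8) exists. You should add $B_u>0$ to your standing assumptions: (P8) as written has no sign constraint on $B_u$ and is unbounded below without it, and your rearrangement giving $\bar B_u\le B_u$ needs $B_u>0$.
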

\begin{proof}[Proof:\nopunct]
	See Appendix E.
\end{proof}

Problem (P8) is still non-convex. To solve the problem, we apply the successive convex approximation (SCA) method and replace the non-convex terms in the constraints with the corresponding first-order Taylor expansions. Specifically, (\ref{problem 8c}) is replaced by
\begin{align}
	\nonumber & \log[\nu_u'(\nu_u'-1)] + \frac{2\nu_u'-1}{\nu_u'(\nu_u'-1)}(\nu_u-\nu_u') \\
	\nonumber & \hspace{15mm} \leq \frac{\log 10}{10} (\eta_{\rm NLoS}-\eta_{\rm LoS}) Pr_u \\
	& \hspace{20mm} -\left(\log Q_u'+\frac{Q_u-Q_u'}{Q_u'}\right) + C,\ u \in \mathcal{U}, \label{problem 8c iter}
\end{align}
(\ref{problem 8e}) is replaced by
\begin{equation}
	\frac{1}{Pr_u'}-\frac{Pr_u-Pr_u'}{(Pr_u')^2} \geq 1+a \cdot {\rm e}^{-b(\theta_u-a)},\ u \in \mathcal{U}, \label{problem 8e iter}
\end{equation}
and (\ref{problem 8f}) is replaced by
\begin{equation}
	\sin \theta_u'+\cos \theta_u'(\theta_u-\theta_u')  \leq \frac{H}{\sqrt{Q_u'}}-\frac{H(Q_u-Q_u')}{2 Q_u'\sqrt{Q_u'}},\ u \in \mathcal{U}. \label{problem 8f iter}
\end{equation}
The new optimization problem after these replacements is given by:
\begin{subequations} \label{problem 9}
	\begin{align}
		\mbox{(P9)} \min_{\substack{h_x, h_y, \\ \mathbf{B}, \bm{\nu}, \\ \mathbf{Q}, \mathbf{Pr}, \bm{\theta}}} \hspace{1mm}  & a_1 \sum_{u=1}^{U} B_u + a_2 R^{S,*}_{\rm total} + a_3 F^*_{\rm total} \label{problem 9a} \\
		\nonumber \mathrm{s.t.} \hspace{3mm} & \mathrm{(\ref{problem 8b}), (\ref{problem 8d}), (\ref{problem 8g})-(\ref{problem 8i}), (\ref{problem 8c iter})-(\ref{problem 8f iter})},
	\end{align}
\end{subequations}
which is a convex problem. The iterative algorithm to solve problem (P8) is summarized in Algorithm \ref{algorithm 1}.
\begin{algorithm}[H]
	\caption{Iterative EIH Location Optimization Algorithm.}\label{algorithm 1}
	\begin{algorithmic}
		\STATE 
		\STATE {\bf Initialization:}
		\STATE \hspace{0.5cm}$(h_x,h_y)$ set as the geometric center of sensors in $\mathcal{U}$
		\STATE \hspace{0.5cm}$Q_u = (h_x-h_{u,x})^2+(h_y-h_{u,y})^2+H^2$
		\STATE \hspace{0.5cm}$\theta_u = \sin^{-1}(H/\sqrt{Q_u})$
		\STATE \hspace{0.5cm}$Pr_u = 1/(1+a\cdot {\rm e}^{-b(\theta_u-a)})$
		\STATE \hspace{0.5cm}$\nu_u = \frac{1}{2}+\sqrt{\frac{1}{4}+\frac{p_u c^2}{\sigma^2 (4\pi f)^2 Q_u} 10^{\frac{\eta_{\rm NLoS}-\eta_{\rm LoS}}{10} Pr_u - \frac{\eta_{\rm NLoS}}{10}}}$
		\STATE \hspace{0.5cm}$B_u = (D_u\log2)/[T_{\rm req}(2\log\nu_u+\frac{1}{\nu_u}-1)]$
		\STATE {\bf repeat}
		\STATE \hspace{0.5cm} substitute $(\mathbf{B}',\bm{\nu}',\mathbf{Q}',\mathbf{Pr}',\bm{\theta}')$ with $(\mathbf{B},\bm{\nu},\mathbf{Q},\mathbf{Pr},\bm{\theta})$
		\STATE \hspace{0.5cm} calculate $(h_x,h_y,\mathbf{B},\bm{\nu},\mathbf{Q},\mathbf{Pr},\bm{\theta})$ via solving (P9)
		\STATE {\bf until} $|\sum_{u=1}^{U}B_u - \sum_{u=1}^{U} B_u'|/\sum_{u=1}^{U} B_u' < \epsilon$
	\end{algorithmic}
\end{algorithm}

\section{Simulation Results}
In this section, we present numerical results to evaluate the performance of our proposed configuration scheme. We set the number of users as $U = 5$. We assume the maximum sensing data size $D_{\rm max}$ ranges from 1 Mbits to 100 Mbits. User $u$' sensing data size is uniformly distributed in the range $[0.1D_{\rm max},D_{\rm max}]$. The computing intensity $\rho_u$ is uniformly distributed over $[1000,5000]$ cycles/bit, and the computing output-to-input ratio is uniformly distributed over $[0.01,0.1]$. The EIH-enabled UAV's position is set as $[0,0,1000]$ m. The positions of the users are uniformly distributed within circle $\{[h_{u,x},h_{u,y},0]|\sqrt{h_{u,x}^2+h_{u,y}^2} \leq 1000\}$ m. The large-scale channel coefficients are set as $(\eta_{\rm LoS},\eta_{\rm NLoS},a,b) = (0.1,21,4.880,0.429)$ \cite{channel coeff}. We set the carrier frequency as $f = 5.8$ GHz, and the light speed as $c = 3\times 10^8$ m/s. The transmit power of user $u$ is set to be $p_u = 1$ W, and the noise power spectral density is $\sigma^2 = -114$ dBm. The weight factors are set as $a_1 = 1$ (MHz)$^{-1}$, $a_2 = 3$ (Mbits/s)$^{-1}$, $a_3 = 1\times10^{-3}$ (Mcycles/s)$^{-1}$. The required task latency is $T_{\rm req} = 100$ s.

\begin{figure}[b]
	\centering
	\includegraphics[width=3.3in]{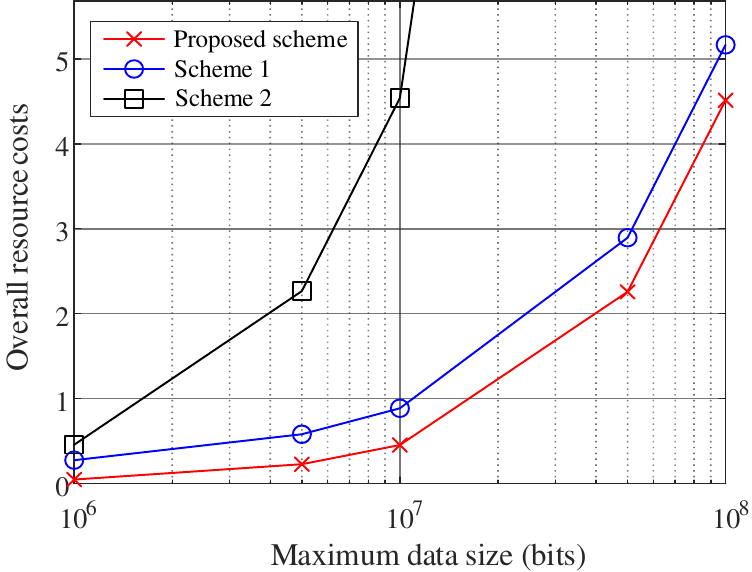}
	\caption{Relationship between the overall latency and the UAV-satellite transmission bandwidth.}
	\label{fig main_compare}
\end{figure}

As shown in Fig. \ref{fig main_compare}, we compare our proposed resource configuration scheme with two other schemes.
\begin{itemize}
	\item Scheme 1: Jointly optimizing the resource configuration and the resource orchestration, assuming that the computation and communications proceed sequentially.
	\item Scheme 2: Optimizing the resource configuration assuming that all resources are equally allocated.
\end{itemize}
Specifically, we present the overall resource costs attained by all three schemes with varying maximum data size. The results are averaged over 50 randomly generated system topologies. We can observe that the proposed scheme could lower the resource cost by about 20\% compared to Scheme 1. This is because the proposed scheme manages to optimize the resource configuration under the assumption that the computation and communications proceed concurrently, which could remarkably improve the system's latency performance and resource efficiency.

\begin{figure*}[t]
	\centering
	\includegraphics[width=7.1in]{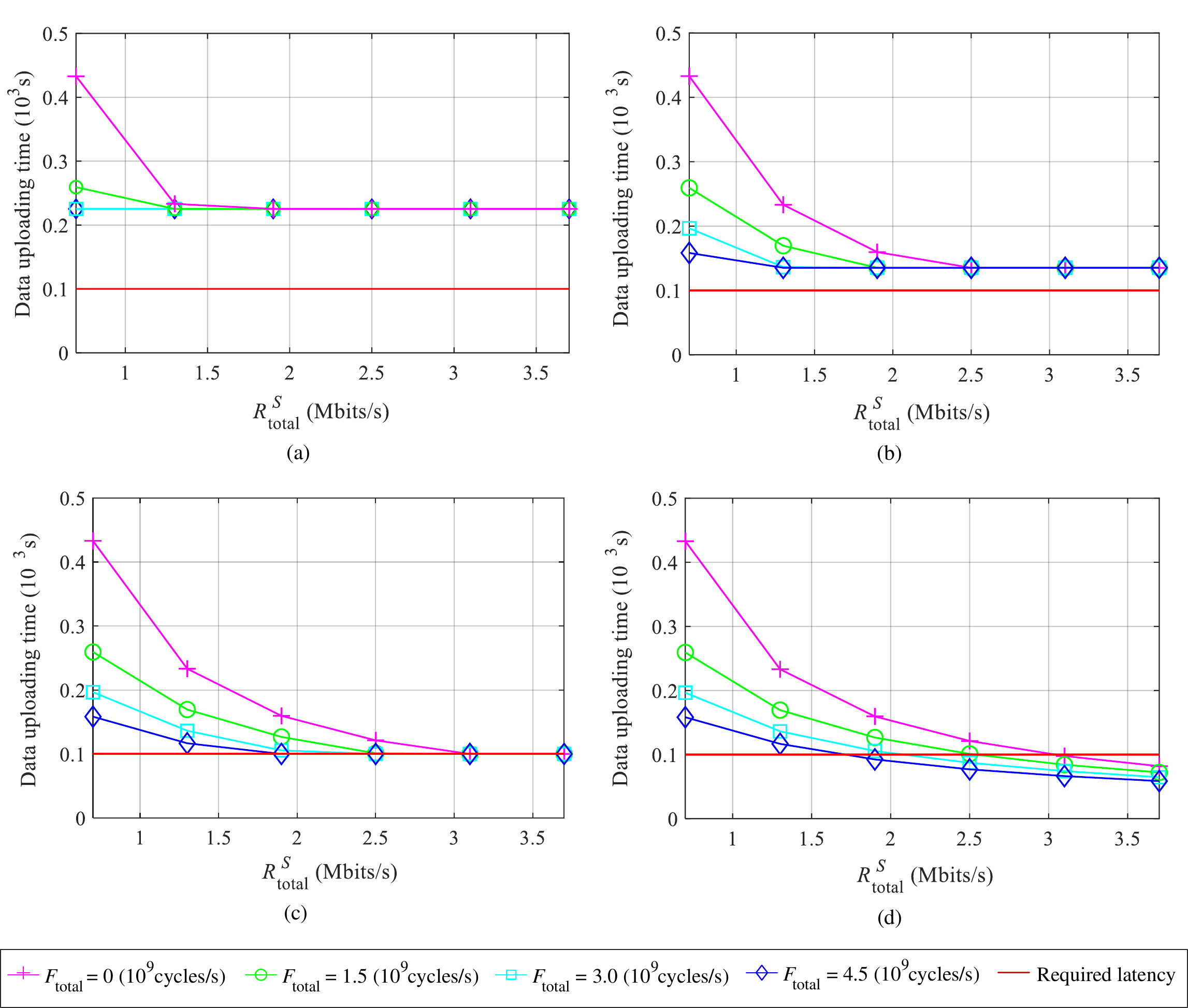}
	\caption{Relationship between the data offloading time and the configured to-satellite data rate $R_{\rm total}^S$ and CPU frequency $F_{\rm total}$, under four different to-user communication bandwidth configurations: (a) $B_{\rm total} = 0.3$ MHz; (b) $B_{\rm total} = 0.5$ MHz; (c) $B_{\rm total} = 0.676$ MHz (optimal configuration $B_{\rm total}^*$); (d) $B_{\rm total} = 1.2$ MHz.}
	\label{fig main_optBtot}
\end{figure*}

In Fig. \ref{fig main_optBtot}, we verify the optimality of the to-user communication bandwidth configuration $B_{\rm total}^*$ acquired by the proposed scheme. Specifically, we consider four different scenarios of to-user bandwidth configuration $B_{\rm total}$. We depict the relationship between the obtained data offloading time and the configured to-satellite data rate $R_{\rm total}^S$ and CPU frequency $F_{\rm total}$ for each scenario, and compare the offloading time with the latency requirement. As shown in Fig. \ref{fig main_optBtot}(a), we have $B_{\rm total}<B_{\rm total}^*$, and the obtained offloading time always exceeds the latency requirement, which means the task cannot be completed in time. In Fig. \ref{fig main_optBtot}(b), the configured bandwidth $B_{\rm total}$ increases but is still less than $B_{\rm total}^*$. The offloading latency decreases correspondingly but cannot reach the task required latency. In Fig. \ref{fig main_optBtot}(c), we have $B_{\rm total}=B_{\rm total}^*$, and the offloading time reaches exactly the latency requirement when resources $R_{\rm total}^S$ and $F_{\rm total}$ are provided adequately. In Fig. \ref{fig main_optBtot}(d), inequality $B_{\rm total}>B_{\rm total}^*$ holds. In this case, for some configurations $(R_{\rm total}^S,F_{\rm total})$ the offloading time could be lower than the latency requirement. We can learn from the above analysis that $B_{\rm total} \geq B_{\rm total}^*$ is a necessary condition for user-EIH communication bandwidth configuration in order to satisfy the task latency requirement.

\begin{figure}[t]
	\centering
	\includegraphics[width=3.5in]{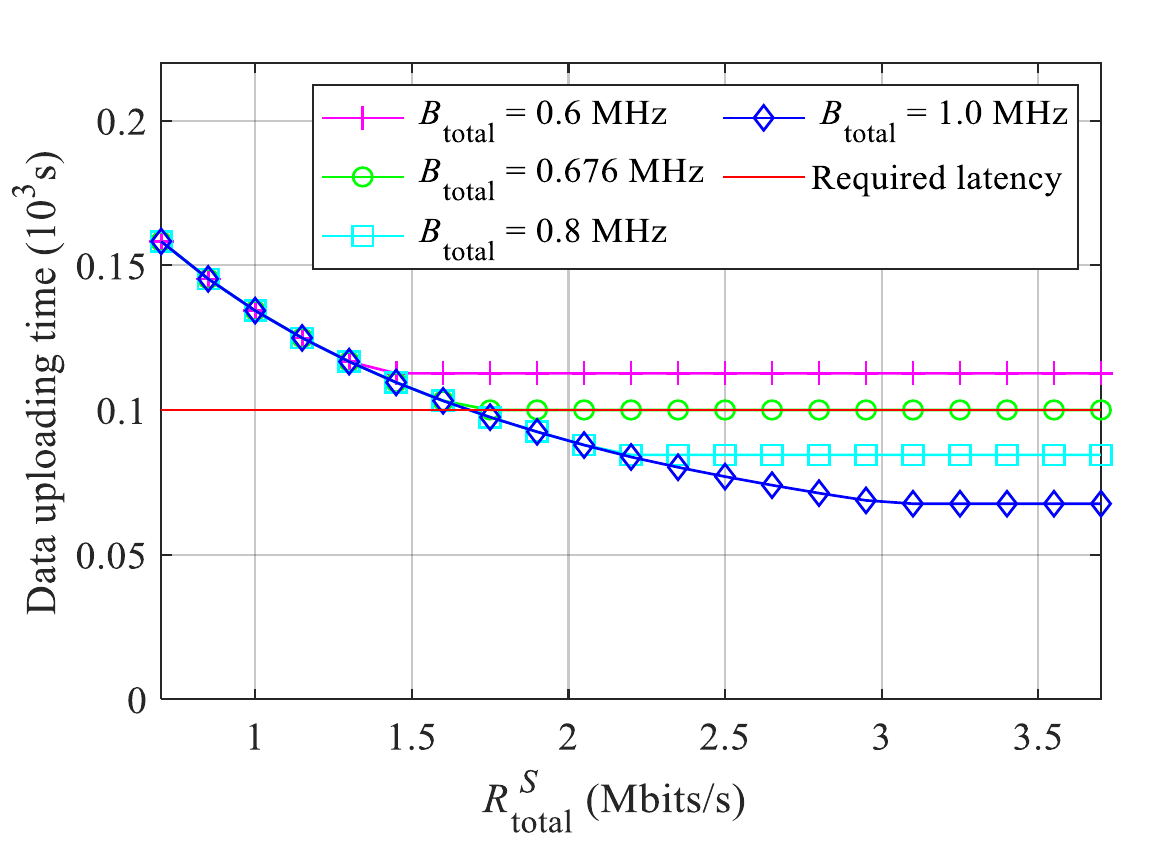}
	\caption{Relationship between the data offloading time and the configured to-satellite data rate $R_{\rm total}^S$, with given CPU frequency $F_{\rm total}$ and varying to-user communication bandwidth $B_{\rm total}$.}
	\label{fig main_optBtot2}
\end{figure}

We further verify the optimality of $B_{\rm total}^*$ in Fig. \ref{fig main_optBtot2}. Specifically, we depict the relationship between the offloading latency and the configured to-satellite data rate $R_{\rm total}^S$, with different to-user bandwidth configuration $B_{\rm total}$ (including $B_{\rm total}^* = 0.676$ MHz). The configured CPU frequency $F_{\rm total}$ is set as $4.5 \times 10^9$ cycles/s. We could observe, given a larger value of $B_{\rm total}$, the offloading latency converges to a lower level as $R_{\rm total}^S$ increases. Nevertheless, the function graphs of different $B_{\rm total}$ intersect with the constraint line at the same coordinate (if there exists an intersection). This means that if the aim is to satisfy the latency constraint, further increasing the to-user bandwidth configuration $B_{\rm total}$ from the proposed optimal value $B_{\rm total}^*$ cannot help save the to-satellite communication resources $R_{\rm total}^S$. This analysis suggests that $B_{\rm total} = B_{\rm total}^*$ is a sufficient condition for user-EIH communication bandwidth configuration. Therefore, the optimality of our proposed configuration $B_{\rm total}^*$ could be confirmed.

%However, comparing with Fig. \ref{fig main_optBtot}(b), the same configurations could provide an offloading time that equals the latency requirement under scenario $B_{\rm total}=B_{\rm total}^*$. This verifies that $B_{\rm total}^*$ is the optimal solution.

\begin{figure}[t]
	\centering
	\includegraphics[width=3.3in]{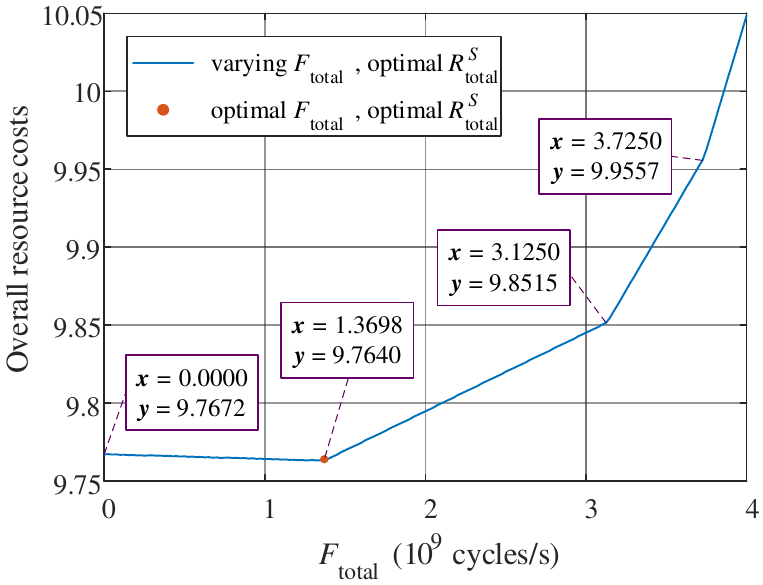}
	\caption{Relationship between the overall resource costs and the configured CPU frequency $F_{\rm total}$, considering optimal configuration of to-user communication bandwidth $B_{\rm total}^*$ and to-satellite data rate $R_{\rm total}^{S,*}$.}
	\label{fig main_optRStotFtot}
\end{figure}

\begin{figure}[t]
	\centering
	\includegraphics[width=3.3in]{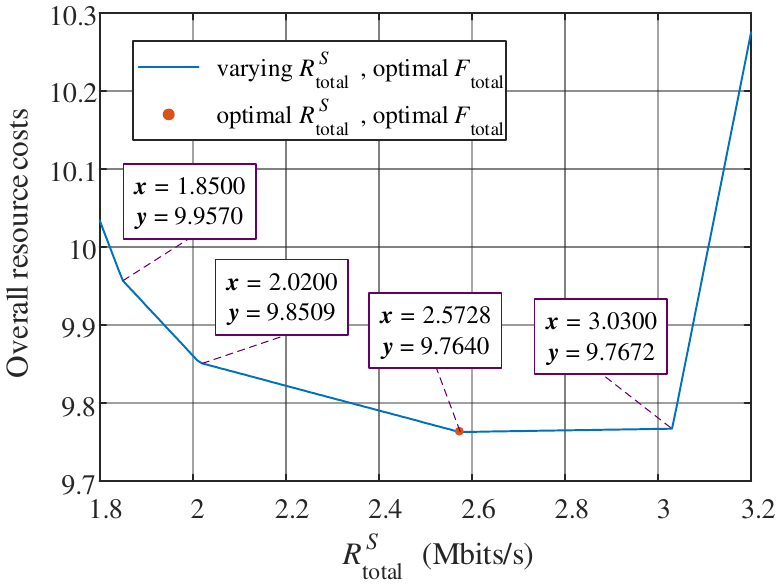}
	\caption{Relationship between the overall resource costs and the configured to-satellite data rate $R_{\rm total}^{S,*}$, considering optimal configuration of to-user communication bandwidth $B_{\rm total}^*$ and CPU frequency $F_{\rm total}$.}
	\label{fig main_optFtotRStot}
\end{figure}

Fig. \ref{fig main_optRStotFtot} and Fig. \ref{fig main_optFtotRStot} verify the optimality of the computing resource configuration $F_{\rm total}^*$ and the to-satellite data rate configuration $R_{\rm total}^{S,*}$ acquired by the proposed scheme. We first set the to-user bandwidth to $B_{\rm total}=B_{\rm total}^*$. In Fig. \ref{fig main_optRStotFtot}, we uniformly sample 200 values of $F_{\rm total}$ within range $[0,4] \times 10^9$ cycles/s. For each value, we use the binary search method to find the optimal $R_{\rm total}^{S}$ that minimizes the resource costs while meeting the latency constraint. The blue line in the figure depicts this minimized resource costs, and the red dot represents the configuration obtained by the proposed scheme. We could see that the proposed configuration reaches the minimum resource cost, which verifies the optimality of the computing resource configuration $F_{\rm total}^*$. In Fig. \ref{fig main_optFtotRStot}, with $R_{\rm total}^S$ varying from 1.8 to 3.2 Mbits/s, we also apply the binary search method to find the optimal $F_{\rm total}$. The figure shows that the proposed configuration also reaches the minimum resource cost, verifying the optimality of the to-satellite data rate configuration $R_{\rm total}^{S,*}$.

\begin{figure*}[t]
	\centering
	\includegraphics[width=7.2in]{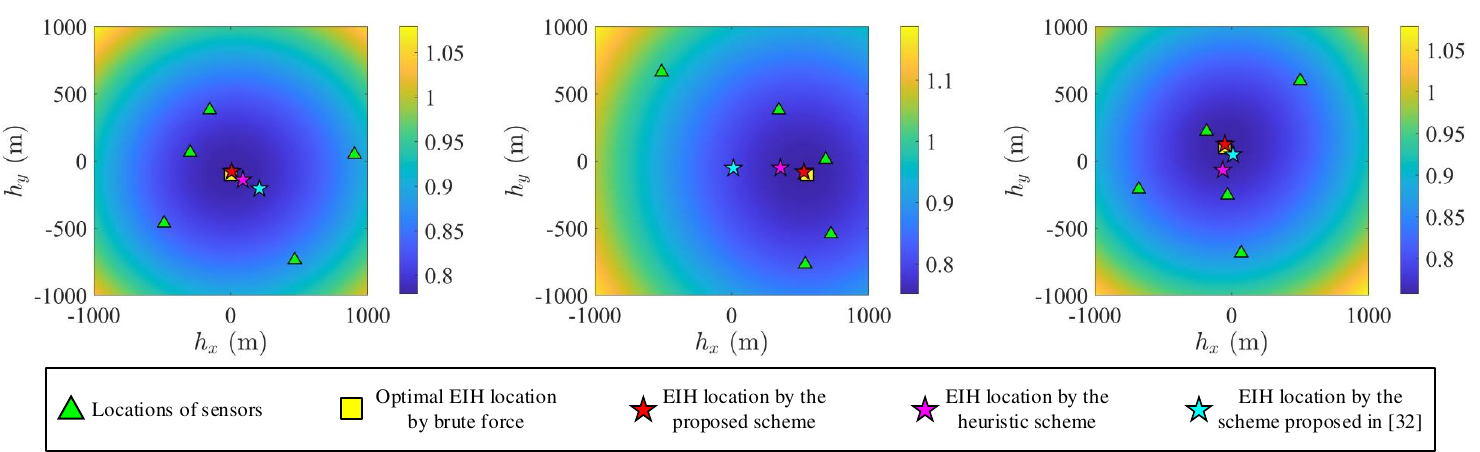}
	\caption{Relationship between the to-user communication resource costs and the position of the EIH-enabled UAV in three different topologies.}
	\label{fig main_position}
\end{figure*}

Finally, we verify the performance of our proposed location optimization scheme. As shown in Fig. \ref{fig main_position}, for three randomly-generated topologies, we first utilize the brute-force method to traverse the disaster area and obtain the to-user communication resource costs for each position. Then, we mark the EIH location by the proposed scheme and compare it with the optimal EIH location by brute force as well as the EIH locations obtained through two other schemes. One is a heuristic scheme that finds the geometric center of the sensors, and the other is the optimization scheme proposed in \cite{RW sat-UAV MEC 09}. We can see that in all topologies, the proposed scheme achieves superior performance compared with the other two schemes. This is because the proposed scheme adapts the EIH's location based on not only the sensors' locations but also their data volumes. Besides, the proposed optimal locations are basically consistent with the optimal locations based on the brute-force method, which verifies the optimality of our proposed solution.

\section{Conclusion}
This paper has investigated the resource configuration problem for edge information hub (EIH)-empowered non-terrestrial networks in post-disaster emergency scenarios. We formulated a joint optimization problem to minimize the overall resource costs while guaranteeing low-latency sensing data uploading, capturing the fundamental tradeoff between communication and computing resources. Through analytical derivations, we derived closed-form optimal solutions for resource configuration and proposed a successive convex approximation-based algorithm for location optimization. Simulation results validated the theoretical findings and demonstrated that the proposed scheme achieves approximately 20\% resource cost reduction compared with benchmark schemes. The optimality of the derived solutions was verified through comprehensive numerical analysis. This work provides systematic insights for efficient deployment of edge intelligence-enabled aerial networks in emergency response applications. Future research may extend the framework to dynamic scenarios with mobile users and multiple cooperative EIHs.

%\section*{Acknowledgments}
%This should be a simple paragraph before the References to thank those individuals and institutions who have %supported your work on this article.

\appendices

\section{Proof of Theorem \ref{theo 0}}
Theorem \ref{theo 0} proposes the optimal data scheduling variable $\eta_u^{\mathrm{opt}}$ with given values of $B_u$, $R_u^S$ and $F_u$, as well as the corresponding function values $T_u(B_u,R_u^S,F_u,\eta_u)$ and $V_u(B_u,R_u^S,F_u,\eta_u)$, which is presented in Table \ref{T_u V_u eta_opt}.

\begin{figure*}[b]
	\centering
	{\includegraphics[height=2.25in]{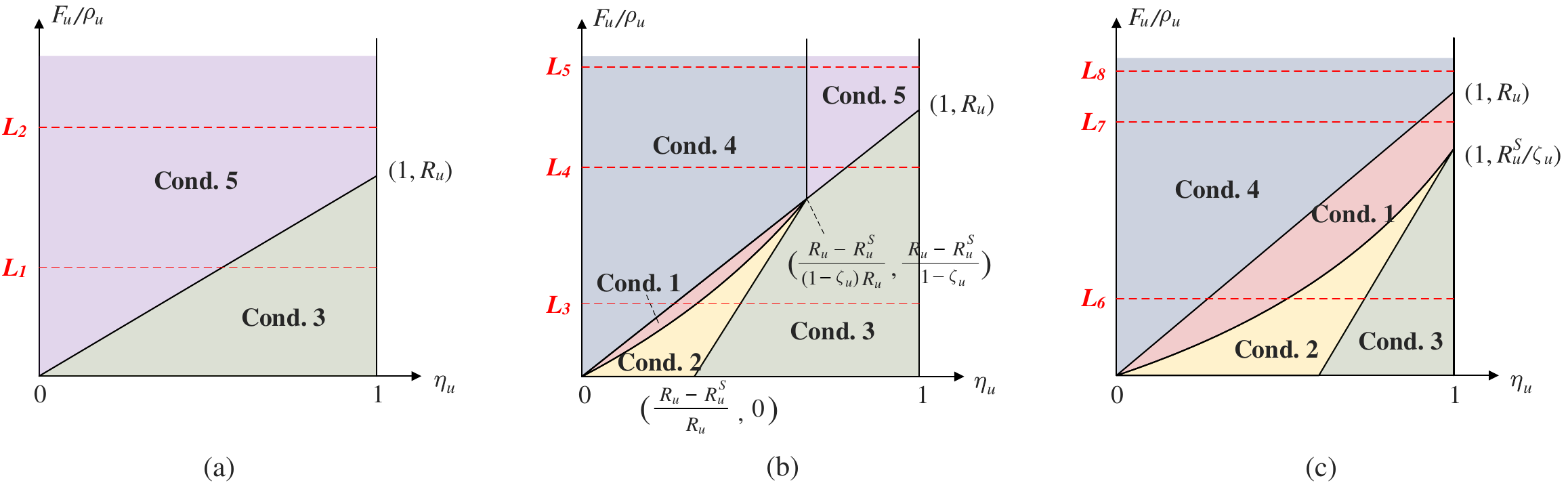}%
		\label{diag eta_opt}}
	\caption{The division of 2-D space $\{(F_u,\eta_u) |\ F_u/\rho_u \geq 0, 0 \leq \eta_u \leq 1\}$ corresponding to the conditions of Table \ref{T_u V_u} (\textit{e.g.}, Cond. 1 corresponds to the first condition of Table \ref{T_u V_u}), with different relationships between $B_u$ and $R_u^S$ values: (a) $R_u(B_u) < R_u^S$, (b) $R_u^S \leq R_u(B_u) < R_u^S/\zeta_u$,  (c) $R_u(B_u) \geq R_u^S/\zeta_u$.}
\end{figure*}

We could validate the optimality of $\eta_u^{\mathrm{opt}}$ intuitively based on Fig. 8. Specifically, we first consider three different scenarios based on the values of $B_u$ and $R_u^S$, namely $R_u(B_u) < R_u^S$, $R_u^S \leq R_u(B_u) < R_u^S/\zeta_u$, and $R_u(B_u) \geq R_u^S/\zeta_u$. For each scenario, we further divide the two-dimensional space $\{(F_u,\eta_u) |\ F_u/\rho_u \geq 0, 0 \leq \eta_u \leq 1\}$ into several zones, where each zone corresponds to a specific condition listed in Table \ref{T_u V_u}. For instance, the purple zone in Fig. 8(a) represents the four-dimensional area $\{(B_u,R_u^S,F_u,\eta_u)|\ R_u(B_u)<R_u^S, F_u/\rho_u \geq R_u(B_u) \eta_u\}$. It can be verified that this area satisfies the fifth condition in Table \ref{T_u V_u} (denoted as Cond. 5). 

Under each of the five conditions, functions $T_u$ and $V_u$ are monotonically increasing, monotonically decreasing or irrelevant with $\eta_u$. Specifically, under Cond. 1, $T_u$ is monotonically decreasing whereas $V_u$ is irrelevant with $\eta_u$. Under Cond. 2, $T_u$ is monotonically increasing whereas $V_u$ is irrelevant with $\eta_u$. Under Cond. 3, both $T_u$ and $V_u$ are monotonically increasing with $\eta_u$. Under Cond. 4, both $T_u$ and $V_u$ are monotonically decreasing with $\eta_u$. Under Cond. 5, both $T_u$ and $V_u$ are irrelevant with $\eta_u$.

In Fig. 8(a), when $F_u/\rho_u$ is relatively small (shown by line $L_1$), as $\eta_u$ increases from $0$ to $1$, $T_u$ and $V_u$ both remain unchanged within the Cond. 5 zone and both increase within the Cond. 3 zone. Therefore, the optimal $\eta_u$ could be arbitrarily selected within $[0,F_u/(\rho_u R_u)]$. When $F_u/\rho_u$ is larger (shown by line $L_2$), as $\eta_u$ increases, $T_u$ and $V_u$ both remain unchanged within the Cond. 5 zone. Therefore the optimal $\eta_u$ could be arbitrarily selected within $[0,1]$. For simplicity of expression, we select $0$ to be the optimal scheduling variable for both scenarios, and therefore
\begin{equation}
	\eta_u^{\mathrm{opt}} = 0,\hspace{20pt} R_u(B_u) < R_u^S.
\end{equation}
We substitute this $\eta_u^{\mathrm{opt}}$ into the function expressions of $T_u(B_u,R_u^S,F_u,\eta_u)$ and $V_u(B_u,R_u^S,F_u,\eta_u)$, referring to the fifth row of Table \ref{T_u V_u}, and the results are $T_u = \frac{D_u}{R_u(B_u)}$ and $V_u = 0$. This verifies the first row of Table \ref{T_u V_u eta_opt}.

In Fig. 8(b), when $F_u/\rho_u$ is relatively small (shown by line $L_3$), as $\eta_u$ increases from $0$ to $1$, it goes through zones Cond. 4, Cond. 1, Cond. 2, and Cond. 3 sequentially. During this process, $T_u$ decrease, decrease, increase and increase, whereas $V_u$ decrease, remain unchanged, remain unchanged and increase. Therefore, the optimal $\eta_u$ should be
\begin{align}
	\nonumber \eta_u^{\mathrm{opt}} & = \frac{F_u}{F_u(1-\zeta_u)+\rho_u R_u^S}, \\
	& R_u^S \leq R_u(B_u) < \frac{R_u^S}{\zeta_u},\ \frac{F_u}{\rho_u} < \frac{R_u(B_u) - R_u^S}{1-\zeta_u}.
\end{align}
We substitute this $\eta_u^{\mathrm{opt}}$ into the function expressions of $T_u(B_u,R_u^S,F_u,\eta_u)$ and $V_u(B_u,R_u^S,F_u,\eta_u)$, referring to the first (or second) row of Table \ref{T_u V_u}, and the results are $T_u = \frac{D_u \rho_u}{F_u(1-\zeta_u)+\rho_u R_u^S}$ and $V_u = \frac{D_u}{R_u(B_u)}\left[R_u(B_u) - R_u^S - (1-\zeta_u) \frac{F_u}{\rho_u}\right]$. This verifies the second row of Table \ref{T_u V_u eta_opt}.

When $F_u/\rho_u$ is larger (shown by line $L_4$), as $\eta_u$ increases from $0$ to $1$, it goes through zones Cond. 4, Cond. 5, Cond. 3 in sequence. During this process, both $T_u$ and $V_u$ first decrease, then remain unchanged and then increase. The optimal $\eta_u$ could be arbitrarily selected within $[\frac{R_u(B_u) - R_u^S}{(1-\zeta_u) R_u(B_u)},\frac{F_u}{\rho_u R_u}]$. When $F_u/\rho_u$ is further increased (shown by line $L_5$), as $\eta_u$ increases from $0$ to $1$, it goes through zones Cond. 4 and Cond. 5 in sequence. During this process, both $T_u$ and $V_u$ first decrease, then remain unchanged. The optimal $\eta_u$ could be arbitrarily selected within $[\frac{R_u(B_u) - R_u^S}{(1-\zeta_u) R_u(B_u)},1]$. For simplicity of expression, we select the optimal scheduling variable for both scenarios to be
\begin{align}
	\nonumber \eta_u^{\mathrm{opt}} & = \frac{R_u(B_u) - R_u^S}{(1-\zeta_u) R_u(B_u)}, \\
	& R_u^S \leq R_u(B_u) < \frac{R_u^S}{\zeta_u},\ \frac{F_u}{\rho_u} \geq \frac{R_u(B_u) - R_u^S}{1-\zeta_u}.
\end{align}
We substitute this $\eta_u^{\mathrm{opt}}$ into the function expressions of $T_u(B_u,R_u^S,F_u,\eta_u)$ and $V_u(B_u,R_u^S,F_u,\eta_u)$, referring to the fourth (or fifth) row of Table \ref{T_u V_u}, and the results are $T_u = \frac{D_u}{R_u(B_u)}$ and $V_u = 0$. This verifies the third row of Table \ref{T_u V_u eta_opt}.

In Fig. 8(c), when $F_u/\rho_u$ is relatively small (shown by line $L_6$), as $\eta_u$ increases from $0$ to $1$, it goes through zones Cond. 4, Cond. 1, Cond. 2, and Cond. 3 sequentially. Therefore, the optimal $\eta_u$ should be
\begin{align}
	\nonumber \eta_u^{\mathrm{opt}} & = \frac{F_u}{F_u(1-\zeta_u)+\rho_u R_u^S}, \\
	& \hspace{50pt} R_u(B_u) \geq \frac{R_u^S}{\zeta_u},\ \frac{F_u}{\rho_u} < \frac{R_u^S}{\zeta_u}.
\end{align}
We substitute this $\eta_u^{\mathrm{opt}}$ into the function expressions of $T_u(B_u,R_u^S,F_u,\eta_u)$ and $V_u(B_u,R_u^S,F_u,\eta_u)$, referring to the first (or second) row of Table \ref{T_u V_u}, and the results are $T_u = \frac{D_u \rho_u}{F_u(1-\zeta_u)+\rho_u R_u^S}$ and $V_u = \frac{D_u}{R_u(B_u)}\left[R_u(B_u) - R_u^S - (1-\zeta_u) \frac{F_u}{\rho_u}\right]$. This verifies the fourth row of Table \ref{T_u V_u eta_opt}.

When $F_u/\rho_u$ is larger (shown by line $L_7$), as $\eta_u$ increases from $0$ to $1$, it goes through zones Cond. 4 and Cond. 1 sequentially. During this process, $T_u$ decrease and decrease, whereas $V_u$ decrease and remain unchanged. Therefore, the optimal $\eta_u$ should be
\begin{equation}
	\eta_u^{\mathrm{opt}} = 1, \hspace{20pt} R_u(B_u) \geq \frac{R_u^S}{\zeta_u},\ \frac{R_u^S}{\zeta_u} \leq \frac{F_u}{\rho_u} < R_u(B_u).
\end{equation}
We substitute this $\eta_u^{\mathrm{opt}}$ into the function expressions of $T_u(B_u,R_u^S,F_u,\eta_u)$ and $V_u(B_u,R_u^S,F_u,\eta_u)$, referring to the first row of Table \ref{T_u V_u}, and the results are $T_u = \frac{\zeta_u D_u}{R_u^S}$ and $V_u = \frac{D_u}{R_u(B_u)}\left[R_u(B_u) - R_u^S - (1-\zeta_u) \frac{F_u}{\rho_u}\right]$. This verifies the fifth row of Table \ref{T_u V_u eta_opt}.

When $F_u/\rho_u$ is further increased (shown by line $L_8$), as $\eta_u$ increases from $0$ to $1$, it only goes through zone Cond. 4. During this process, both $T_u$ and $V_u$ decrease. Therefore, the optimal $\eta_u$ should be
\begin{equation}
	\eta_u^{\mathrm{opt}} = 1, \hspace{20pt} R_u(B_u) \geq \frac{R_u^S}{\zeta_u},\ \frac{F_u}{\rho_u} \geq R_u(B_u).
\end{equation}
We substitute this $\eta_u^{\mathrm{opt}}$ into the function expressions of $T_u(B_u,R_u^S,F_u,\eta_u)$ and $V_u(B_u,R_u^S,F_u,\eta_u)$, referring to the fourth row of Table \ref{T_u V_u}, and the results are $T_u = \frac{\zeta_u D_u}{R_u^S}$ and $V_u = \frac{D_u}{R_u(B_u)}(\zeta_u R_u(B_u) - R_u^S)$. This verifies the sixth row of Table \ref{T_u V_u eta_opt}. By now Theorem \ref{theo 0} has been proved.

\begin{table*}[t]
	\caption{Mapping between $(\mathbf{B}^{(1)}, \mathbf{R}^{S,(1)}, \mathbf{F}^{(1)})$ and $(\mathbf{B}^{(2)}, \mathbf{R}^{S,(2)}, \mathbf{F}^{(2)})$.}
	\centering
	\renewcommand{\arraystretch}{3.3}
	\begin{tabular}{|m{2.95in}|m{1.55in}|m{0.65in}|m{1.15in}|}
		\hline
		\textbf{Condition} & $B_u^{(2)}$ & $R_u^{S,(2)}$ & $F_u^{(2)}$ \\ \hline
		$R_u(B_u^{(1)}) < R_u^{S,(1)}$ & $B_u^{(1)}$ & $R_u(B_u^{(1)})$ & $0$ \\ \hline
		$R_u^{S,(1)} \leq R_u(B_u^{(1)}) < \frac{R_u^{S,(1)}}{\zeta_u}$, $\frac{F_u^{(1)}}{\rho_u} < \frac{R_u(B_u^{(1)}) - R_u^{S,(1)}}{1-\zeta_u}$  & $R_u^{-1}(R_u^{S,(1)}+\frac{(1-\zeta_u) F_u^{(1)}}{\rho_u})$ & $R_u^{S,(1)}$ & $F_u^{(1)}$ \\ \hline
		$R_u^{S,(1)} \leq R_u(B_u^{(1)}) < \frac{R_u^{S,(1)}}{\zeta_u}$, $\frac{F_u^{(1)}}{\rho_u} > \frac{R_u(B_u^{(1)}) - R_u^{S,(1)}}{1-\zeta_u}$  & $B_u^{(1)}$ & $R_u^{S,(1)}$ & $\rho_u\frac{R_u(B_u^{(1)})-R_u^{S,(1)}}{1-\zeta_u}$ \\ \hline
		$R_u(B_u^{(1)}) \geq \frac{R_u^{S,(1)}}{\zeta_u}$, $\frac{F_u^{(1)}}{\rho_u} < \frac{R_u^{S,(1)}}{\zeta_u}$ & $R_u^{-1}(R_u^{S,(1)}+\frac{(1-\zeta_u) F_u^{(1)}}{\rho_u})$ & $R_u^{S,(1)}$ & $F_u^{(1)}$ \\ \hline
		$R_u(B_u^{(1)}) \geq \frac{R_u^{S,(1)}}{\zeta_u}$, $\frac{F_u^{(1)}}{\rho_u} \geq \frac{R_u^{S,(1)}}{\zeta_u}$ & $R_u^{-1}(\frac{R_u^{S,(1)}}{\zeta_u})$ & $R_u^{S,(1)}$ & $\rho_u \frac{R_u^{S,(1)}}{\zeta_u}$ \\ \hline
	\end{tabular}
	\label{B R_S F 2}
\end{table*}

\section{Proof of Theorem \ref{theo 2}}
Since (P3) is (P2) reducing its feasible region by adding two constraints, it is obvious that $\mathbf{x}^*$ is a feasible solution to (P2). We adopt the method of proof by contradiction. Assume that $\mathbf{x}^*$ is an optimal solution to (P3), but not an optimal solution to (P2). This means there exists a feasible solution $\mathbf{x}^{(1)} = (B_{\rm total}^{(1)}, R^{S,(1)}_{\rm total}, F_{\rm total}^{(1)}, V_{\rm total}^{(1)}, \mathbf{B}^{(1)}, \mathbf{R}^{S,(1)}, \mathbf{F}^{(1)})$ to (P2) that could achieve a lower objective function value, namely
\begin{align}
	\nonumber & a_1 B_{\rm total}^{(1)} + a_2 R^{S,(1)}_{\rm total} + a_3 F_{\rm total}^{(1)} + a_4 V_{\rm total}^{(1)} \\
	& \hspace{10mm} < a_1 B_{\rm total}^* + a_2 R^{S,*}_{\rm total} + a_3 F_{\rm total}^* + a_4 V_{\rm total}^*. \label{theo 2 ineq}
\end{align}
Based on $\mathbf{x}^{(1)}$, we could derive a new feasible solution $\mathbf{x}^{(2)} = (B_{\rm total}^{(1)}, R^{S,(1)}_{\rm total}, F_{\rm total}^{(1)}, V_{\rm total}^{(1)}, \mathbf{B}^{(2)}, \mathbf{R}^{S,(2)}, \mathbf{F}^{(2)})$. The mapping between $(\mathbf{B}^{(1)}, \mathbf{R}^{S,(1)}, \mathbf{F}^{(1)})$ and $(\mathbf{B}^{(2)}, \mathbf{R}^{S,(2)}, \mathbf{F}^{(2)})$ are depicted in Table \ref{B R_S F 2}, referring to Appendix C of \cite{EIH RW sat-UAV MEC 05}. We note that here it is assumed $R_u(\cdot)$ is a strictly increasing function and $R_u^{-1}(\cdot)$ is its inverse function.

From Table \ref{B R_S F 2}, we could verify that $\mathbf{B}^{(2)} \leq \mathbf{B}^{(1)}$, $\mathbf{R}^{S,(2)} \leq \mathbf{R}^{S,(1)}$ and $\mathbf{F}^{(2)} \leq \mathbf{F}^{(1)}$. Therefore, $\mathbf{x}^{(2)}$ satisfies constraints (\ref{problem 1d})-(\ref{problem 1g}). Besides, we could refer to Table \ref{T_u V_u eta_opt} to verify that
\begin{align}
	\nonumber & T_u^{\eta-{\rm opt}}(B_u^{(2)},R_u^{S,(2)},F_u^{(2)}) = \frac{D_u}{R_u(B_u^{(2)})} \\
	& \hspace{15mm} \leq T_u^{\eta-{\rm opt}}(B_u^{(1)},R_u^{S,(1)},F_u^{(1)}), \ u \in \mathcal{U},
\end{align}
\begin{align}
	\nonumber & V_u^{\eta-{\rm opt}}(B_u^{(2)},R_u^{S,(2)},F_u^{(2)}) = 0 \\
	& \hspace{15mm} \leq V_u^{\eta-{\rm opt}}(B_u^{(1)},R_u^{S,(1)},F_u^{(1)}), \ u \in \mathcal{U},
\end{align}
which means $\mathbf{x}^{(2)}$ satisfies constraints (\ref{problem 2b}) and (\ref{problem 2c}). Moreover, it could be easily verified that $\mathbf{x}^{(2)}$ satisfies constraints (\ref{problem 3g}) and (\ref{problem 3h}) referring to Table \ref{B R_S F 2}. Consequently, $\mathbf{x}^{(2)}$ is a feasible solution to problem (P3). Considering inequality (\ref{theo 2 ineq}), $\mathbf{x}^*$ is not an optimal solution to (P3), causing a contradiction to the assumption. This proves the theorem.

\section{Proof of Theorem \ref{theorem B opt}}
We prove this theorem in two steps. First, we prove that there is no feasible solution to problem (\ref{problem 4}) that satisfies $B_{\rm total} < B_{\rm total}^*$. Since $B_{\rm total}^* = \sum_{u=1}^{U} B_u^*$, for a solution satisfying $B_{\rm total} < B_{\rm total}^*$, there exists at least one $u \in \mathcal{U}$ with $B_u < B_u^*$, based on the pigeonhole principle. Since $R_u(B_u)$ is strictly increasing, this equals to
\begin{equation}
	R_u(B_u) < R_u(B_u^*) = D_u/T_{\rm req}. \ u = 1,...,U.
\end{equation}
Therefore, this solution does not satisfy constraint (\ref{problem 4b}). This verifies that for any solution where $B_{\rm total} < B_{\rm total}^*$ holds, it is not in the feasible region of problem (\ref{problem 4}).

Then, we prove that for any feasible solution to problem (\ref{problem 4}) with $B_{\rm total} > B_{\rm total}^*$, we could find a feasible solution to problem (\ref{problem 4}), which satisfies (\ref{B_total opt}) and (\ref{B_u opt}), and obtains an equal or lower objective function value. Specifically, we consider a feasible solution $\mathbf{x}^{(1)} = (B_{\rm total}^{(1)}, R^{S,(1)}_{\rm total}, F^{(1)}_{\rm total}, \mathbf{B}^{(1)}, \mathbf{R}^{S,(1)}, \mathbf{F}^{(1)})$ with $B_{\rm total}^{(1)} > B_{\rm total}^*$. As a feasible solution, it satisfies that
\begin{equation}
	R_u(B_u^{(1)}) \geq  D_u/T_{\rm req} = R_u(B_u^*), \ u \in \mathcal{U},
\end{equation}
which equals to $B_u^{(1)} \geq B_u^*, \ u \in \mathcal{U}$. A new solution $\mathbf{x}^{(2)} = (B_{\rm total}^{(2)}, R^{S,(2)}_{\rm total}, F^{(2)}_{\rm total}, \mathbf{B}^{(2)}, \mathbf{R}^{S,(2)}, \mathbf{F}^{(2)})$ could be derived based on $\mathbf{x}^{(1)}$. The expressions of $\mathbf{x}^{(2)}$ could be given as follows
\begin{equation}
	B_u^{(2)} = B_u^*,\ u \in \mathcal{U}. \label{prf_theo3 B_u}
\end{equation}
\begin{equation}
	R_u^{S,(2)} = \frac{R_u(B_u^*)}{R_u(B_u^{(1)})} \cdot R_u^{S,(1)},\ u \in \mathcal{U}. \label{prf_theo3 R_u^S}
\end{equation}
\begin{align}
	\nonumber F_u^{(2)} & = \frac{R_u(B_u^*)}{R_u(B_u^{(1)})} \cdot F_u^{(1)} \\
	& = \frac{R_u(B_u^*)}{R_u(B_u^{(1)})} \cdot \rho_u \frac{R_u(B_u^{(1)})-R_u^{S,(1)}}{1-\zeta_u},\ u \in \mathcal{U}. \label{prf_theo3 F_u}
\end{align}
\begin{equation}
	B_{\rm total}^{(2)} = \sum_{u=1}^{U} B_u^{(2)} = B_{\rm total}^*. \label{prf_theo3 B_total}
\end{equation}
\begin{equation}
	R_{\rm total}^{S,(2)} = \sum_{u=1}^{U} R_u^{S,(2)}. \label{prf_theo3 R_total^S}
\end{equation}
\begin{equation}
	F_{\rm total}^{(2)} = \sum_{u=1}^{U} F_u^{(2)}. \label{prf_theo3 F_total}
\end{equation}
From the expressions we could easily verify that $\mathbf{x}^{(2)}$ is also a feasible solution to problem (\ref{problem 4}), and it satisfies (\ref{B_total opt}) and (\ref{B_u opt}). From (\ref{prf_theo3 B_u})-(\ref{prf_theo3 F_u}) we could verify that $B_u^{(2)} \leq B_u^{(1)}$, $R_u^{S,(2)} \leq R_u^{S,(1)}$ and $F_u^{(2)} \leq F_u^{(1)}$. Therefore, the following statement hold
\begin{equation}
	B_{\rm total}^{(2)} \leq \sum_{u=1}^{U} B_u^{(1)} \leq B_{\rm total}^{(1)}.
\end{equation}
\begin{equation}
	R_{\rm total}^{S,(2)} \leq \sum_{u=1}^{U} R_u^{S,(1)} \leq R_{\rm total}^{S,(1)}.
\end{equation}
\begin{equation}
	F_{\rm total}^{(2)} \leq \sum_{u=1}^{U} F_u^{(1)} \leq F_{\rm total}^{(1)}.
\end{equation}
Therefore, $\mathbf{x}^{(2)}$ could obtain equal or lower objective function value compared with $\mathbf{x}^{(1)}$.

In the case where $B_{\rm total} < B_{\rm total}^*$, there is no feasible solution to problem (\ref{problem 4}). In the case where $B_{\rm total} > B_{\rm total}^*$, for any feasible solution, we could derive another feasible solution that satisfies (\ref{B_total opt}) and (\ref{B_u opt}) and obtains equal or lower objective function value. Therefore, we verify that there exists at least one optimal solution to problem (\ref{problem 4}) that satisfies (\ref{B_total opt}) and (\ref{B_u opt}), which completes the proof of Theorem \ref{theorem B opt}.

\section{Proof of Theorem \ref{theorem R^S F opt}}
First, we could easily verify that the optimal solution to problem (\ref{problem 5}) must ensure that constraints (\ref{problem 5b}) and (\ref{problem 5c}) are tight. If they are not tight, we could obtain a better solution by simply lowering $R_{\rm total}^S$ or $F_{\rm total}$. Therefore, problem (\ref{problem 5}) could be equivalently transformed into the following problem
\begin{subequations} \label{problem apdx1}
	\begin{align}
		\mbox{(P10)}\ \min_{\mathbf{R}^S} &\ \ \sum_{u=1}^{U} \left[\left(a_2-a_3\frac{\rho_u}{1-\zeta_u}\right) R_u^S \right]+C \label{problem apdx1a} \\
		\mathrm{s.t.} &\ \ \zeta_u \frac{D_u}{T_{\rm req}} \leq R_u^S \leq \frac{D_u}{T_{\rm req}}, \ u \in \mathcal{U}, \label{problem apdx1b}
	\end{align}
\end{subequations}
where $C = \sum_{u=1}^{U} \frac{\rho_u}{1-\zeta_u} \frac{D_u}{T_{\rm req}} + a_1 B_{\rm total}^*$ is a constant. We could see that for a specific $u$, if $a_2-a_3\frac{\rho_u}{1-\zeta_u} \geq 0$, the objective function is increasing with $R_u^S$, and the optimal solution satisfies $R_u^S = \zeta_u \frac{D_u}{T_{\rm req}}$. Otherwise, when $a_2-a_3\frac{\rho_u}{1-\zeta_u} < 0$, the objective function is decreasing with $R_u^S$, and the optimal solution satisfies $R_u^S = \frac{D_u}{T_{\rm req}}$. To summarize, the optimal solution could be expressed as follows
\begin{equation}
	R_u^S = \frac{D_u}{T_{\rm req}} \left[ 1-(1-\zeta_u) \cdot {\rm I}\left( \frac{\rho_u}{1-\zeta_u} \leq \frac{a_2}{a_3} \right) \right],
\end{equation}
where ${\rm I}(\cdot)$ is an indicator function, which equals to 1 when the statement holds, and 0 if it does not hold.

We could further obtain the optimal $R_{\rm total}^S$ and $F_{\rm total}$ as follows
\begin{equation}
	R_{\rm total}^S = \sum_{u=1}^{U} \frac{D_u}{T_{\rm req}} \left[ 1-(1-\zeta_u) \cdot {\rm I}\left( \frac{\rho_u}{1-\zeta_u} \leq \frac{a_2}{a_3} \right) \right].
\end{equation}
\begin{equation}
	F_{\rm total} = \sum_{u=1}^{U} \rho_u \frac{D_u}{T_{\rm req}} \cdot {\rm I}\left( \frac{\rho_u}{1-\zeta_u} \leq \frac{a_2}{a_3} \right).
\end{equation}
This proves the theorem.

\section{Proof of Theorem \ref{theo placement prob equiv}}
First (\ref{problem 7c}) could be equivalently rewritten as
\begin{equation}
	\frac{p_u l_u^2}{\sigma^2 \nu_u} = \nu_u-1.
\end{equation}
Since (\ref{problem 7c}) holds in (P7), replacing (\ref{problem 7b}) with
\begin{equation}
	2\log\nu_u+\frac{1}{\nu_u}-1 = \frac{D_u\log2}{T_{\rm req} B_u}
\end{equation}
is an equivalent transformation.

Besides, referring to (\ref{large-scale channel})-(\ref{elevation angle}), (\ref{problem 7c}) equals to the following set of equations
\begin{align}
	\nu_u (\nu_u-1) & = \frac{p_u}{\sigma^2} \left(\frac{c}{4\pi f d_u}\right)^2 10^{\frac{\eta_{\rm NLoS}-\eta_{\rm LoS}}{10} Pr_u - \frac{\eta_{\rm NLoS}}{10}}, \label{problem apdx2c prev} \\
	d_u & = \sqrt{(h_x-h_{u,x})^2+(h_y-h_{u,y})^2+H^2}, \\
	Pr_u & = \frac{1}{1+a \cdot {\rm e}^{-b(\theta_u-a)}}, \\
	\theta_u & = \sin^{-1}(H/d_u),
\end{align}

Therefore, by introducing variables $\mathbf{Q}=\{Q_u, u \in \mathcal{U}\}$, $\mathbf{Pr}=\{Pr_u, u \in \mathcal{U}\}$ and $\bm{\theta}=\{\theta_u, u \in \mathcal{U}\}$, (P7) could be equivalently recast as
\begin{subequations} \label{problem apdx2}
	\begin{align}
		\mbox{(P11)} \min_{\substack{h_x, h_y, \\ \mathbf{B}, \bm{\nu}, \\ \mathbf{Q}, \mathbf{Pr}, \bm{\theta}}} \hspace{1mm}  & a_1 \sum_{u=1}^{U} B_u + a_2 R^{S,*}_{\rm total} + a_3 F^*_{\rm total} \label{problem apdx2a} \\
		\mathrm{s.t.} \hspace{3mm} & 2\log\nu_u+\frac{1}{\nu_u}-1 = \frac{D_u\log2}{T_{\rm req} B_u},\ u \in \mathcal{U}, \label{problem apdx2b} \\
		\nonumber & \log[\nu_u(\nu_u-1)] = \frac{\log 10}{10} (\eta_{\rm NLoS}-\eta_{\rm LoS}) Pr_u \\
		& \hspace{24mm} -\log Q_u + C,\ u \in \mathcal{U}, \label{problem apdx2c} \\
		\nonumber & Q_u = (h_x-h_{u,x})^2+(h_y-h_{u,y})^2 \\
		& \hspace{35mm} +H^2 ,\ u \in \mathcal{U}, \label{problem apdx2d} \\
		& \frac{1}{Pr_u} = 1+a \cdot {\rm e}^{-b(\theta_u-a)},\ u \in \mathcal{U}, \label{problem apdx2e} \\
		& \sin \theta_u = \frac{H}{\sqrt{Q_u}},\ u \in \mathcal{U}, \label{problem apdx2f} \\
		& \nu_u \geq 1,\ u \in \mathcal{U}, \label{problem apdx2g}
	\end{align}
\end{subequations}
where $C = \log(p_u/\sigma^2)+2\log(c/4\pi f)-(\log 10/10) \cdot \eta_{\rm NLoS}$, and thus (\ref{problem apdx2c}) is actually (\ref{problem apdx2c prev}) taking logarithm on both sides.

Comparing (P8) with (P11), we could see that if equality holds in constraints (\ref{problem 8b})-(\ref{problem 8f}) in the optimal solution, (P8) is equivalent to (P11) and therefore equivalent to (P7). First, (\ref{problem 8b}) must hold in the optimal solution since (P8) aims to minimize $B_u$. 

Define function $f(x) = 2\log x+1/x-1$. From (\ref{problem 8b}) we know that in order to minimize $B_u$ we need to maximize $f_1(\nu_u)$. It could be easily verified that $f(x)$ is monotonically increasing when $x \geq 1$. Due to this monotonicity, $\nu_u$ is maximized in the optimal solution, and thus equality must hold in (\ref{problem 8c}).

From (\ref{problem 8c}) we know that in order to maximize $\nu_u$ we need $Pr_u$ as large as possible while $Q_u$ as small as possible. Since $Pr_u$ is maximized, equality must hold in (\ref{problem 8e}).

Besides, in order to maximize $Pr_u$, $\theta_u$ should be as large as possible, and therefore equality must hold in (\ref{problem 8f}).

Moreover, maximizing $\theta_u$ also requires $Q_u$ to be as small as possible referring to (\ref{problem 8f}). Thus, equality must hold in (\ref{problem 8d}) in the optimal solution.

By now we have proven that equality holds in constraints (\ref{problem 8b})-(\ref{problem 8f}) in the optimal solution. This proves that (P8) is equivalent to (P11) and therefore equivalent to (P7), which proves the theorem.

\vspace{10mm}

%\begin{thebibliography}{1}
%	\bibliographystyle{IEEEtran}

\vfill


\begin{thebibliography}{10}
	\providecommand{\url}[1]{#1} \csname url@rmstyle\endcsname
	\providecommand{\newblock}{\relax}
	\providecommand{\bibinfo}[2]{#2}
	\providecommand\BIBentrySTDinterwordspacing{\spaceskip=0pt\relax}
	\providecommand\BIBentryALTinterwordstretchfactor{4}
	\providecommand\BIBentryALTinterwordspacing{\spaceskip=\fontdimen2\font
		plus \BIBentryALTinterwordstretchfactor\fontdimen3\font minus
		\fontdimen4\font\relax}
	\providecommand\BIBforeignlanguage[2]{{%
			\expandafter\ifx\csname l@#1\endcsname\relax
			\typeout{** WARNING: IEEEtran.bst: No hyphenation pattern has been}%
			\typeout{** loaded for the language `#1'. Using the pattern for}%
			\typeout{** the default language instead.}%
			\else \language=\csname l@#1\endcsname \fi #2}}
	
	\bibitem{intro 01}
	Statista. (Dec. 6, 2024). {\it Economic Loss from Natural Disaster Events Worldwide in 2023, by Peril (in Billion U.S. Dollars)}. [Online]. Available: https://www.statista.com/statistics/510922/natural-disasters-globally-and-economic-losses-by-peril/
	
	\bibitem{intro 02}
	T. Kedia, J. Ratcliff, M. O'Connor, {\it et al.}, ``Technologies enabling situational awareness during disaster response: a systematic review,” {\it Disaster Med. Public Health Prep.}, vol. 16, no. 1, pp. 341–359, 2022.
	
	\bibitem{intro 03}
	D. G. C., A. Ladas, Y. A. Sambo, H. Pervaiz, C. Politis, and M. A. Imran, ``An overview of post-disaster emergency communication systems in the future networks,” {\it IEEE Wireless Commun.}, vol. 26, no. 6, pp. 132–139, Dec. 2019.
	
	\bibitem{intro 04}
	M. Matracia, N. Saeed, M. A. Kishk, and M. -S. Alouini, ``Post-disaster communications: Enabling technologies, architectures, and open Challenges," {\it IEEE Open J. Commun. Soc.}, vol. 3, pp. 1177-1205, 2022.
	
	\bibitem{intro 05}
	W. Feng, Y. Wang, Y. Chen, N. Ge, and C.-X. Wang, ``Structured satellite-UAV-terrestrial networks for 6G Internet of Things,” {\it IEEE Netw.}, vol. 38, no. 4, pp. 48-54, Jul. 2024.
	
	\bibitem{intro s01}
	T. Wei, W. Feng, Y. Chen, C.-X. Wang, N. Ge, and J. Lu, ``Hybrid satellite-terrestrial communication networks for the maritime Internet of Things: Key technologies, opportunities, and challenges,” {\it IEEE Internet Things J.}, vol. 8, no. 11, pp. 8910-8934, Jun. 2021.
	
	\bibitem{NTN}
	M. Giordani and M. Zorzi, ``Non-terrestrial networks in the 6G era: Challenges and opportunities," {\it IEEE Netw.}, vol. 35, no. 2, pp. 244-251, Mar./Apr. 2021.
	
	\bibitem{intro s02}
	W. Feng, Y. Lin, Y. Wang, {\it et al.}, ``Radio map-based cognitive satellite-UAV networks towards 6G on-demand coverage,” {\it IEEE Trans. Cogn. Commun. Netw.}, vol. 10, no. 3, pp. 1075-1089, Jun. 2024.
	
	\bibitem{MEC}
	N. Abbas, Y. Zhang, A. Taherkordi, and T. Skeie, ``Mobile edge computing: A survey," {\it IEEE Internet Things J.}, vol. 5, no. 1, pp. 450-465, Feb. 2018.
	
	\bibitem{EIH RW sat-UAV MEC 05}
	Y. Lin, W. Feng, Y. Chen, N. Ge, Z. Feng, and Y. Gao, ``Edge information hub-empowered 6G NTN: Latency-oriented resource orchestration and configuration," \textit{IEEE Open J. Commun. Soc.}, vol. 5, pp. 4241-4259, 2024.
	
	\bibitem{RW NTN 01}
	X. Fang, W. Feng, T. Wei, Y. Chen, N. Ge, and C. -X. Wang, ``5G embraces satellites for 6G ubiquitous IoT: Basic models for integrated satellite terrestrial networks," {\it IEEE Internet Things J.}, vol. 8, no. 18, pp. 14399-14417, Sep. 2021.
	\bibitem{RW NTN s01}
	Z. Wei, M. Zhu, N. Zhang, {\it et al.}, ``UAV-assisted data collection for Internet of Things: A survey,” {\it IEEE Internet Things J.}, vol. 9, no. 17, pp. 15460–15483, Sep. 2022.
	\bibitem{RW NTN s02}
	Y. Bian, J. Hu, P. Zhang, {\it et al.}, ``Joint trajectory control, power control, and collection schedule in UAV-assisted anti-jamming wireless data collection with imperfect CSI," {\it IEEE Commun. Lett.}, vol. 28, no. 12, pp. 2839-2843, Dec. 2024.
	\bibitem{RW NTN s03}
	M. Li, S. He, and H. Li, ``Minimizing mission completion time of UAVs by jointly optimizing the flight and data collection trajectory in UAV-enabled WSNs," {\it IEEE Internet Things J.}, vol. 9, no. 15, pp. 13498-13510, Aug. 2022.

	%\bibitem{RW NTN 02}
	%J. Wang, Z. Na, and X. Liu, ``Collaborative design of multi-UAV trajectory and resource scheduling for 6G-enabled Internet of Things," {\it IEEE Internet Things J.}, vol. 8, no. 20, pp. 15096-15106, Oct. 2021.
	
	\bibitem{RW NTN 03}
	C. Liu, W. Feng, Y. Chen, C. -X. Wang, and N. Ge, ``Cell-free satellite-UAV networks for 6G wide-area Internet of Things," {\it IEEE J. Sel. Areas Commun.}, vol. 39, no. 4, pp. 1116-1131, Apr. 2021.
	
	\bibitem{RW NTN 04}
	Z. Li, Y. Wang, M. Liu, {\it et al.}, ``Energy efficient resource allocation for UAV-assisted space-air-ground Internet of Remote Things networks," {\it IEEE Access}, vol. 7, pp. 145348-145362, 2019.
	
	\bibitem{RW NTN 05}
	T. Ma, H. Zhou, B. Qian, {\it et al.}, ``UAV-LEO integrated backbone: A ubiquitous data collection approach for B5G Internet of Remote Things networks," {\it IEEE J. Sel. Areas Commun.}, vol. 39, no. 11, pp. 3491-3505, Nov. 2021.
	
	\bibitem{RW UAV MEC 01}
	J. Zhang, G. Zhang, X. Wang, X. Zhao, P. Yuan, and H. Jin, ``UAV-assisted task offloading in edge computing," {\it IEEE Internet Things J.}, vol. 12, no. 5, pp. 5559-5574, Mar. 2025.
	
	\bibitem{RW UAV MEC 02}
	L. Wang, Y. Li, Y. Chen, T. Li, and Z. Yin, ``Air–ground coordinated MEC: Joint task, time allocation and trajectory design," {\it IEEE Trans. Veh. Technol.}, vol. 74, no. 3, pp. 4728-4743, Mar. 2025.
	
	\bibitem{RW UAV MEC 03}
	J. Zhang, H. Luo, X. Chen, H. Shen, and L. Guo, ``Minimizing response delay in UAV-assisted mobile edge computing by joint UAV deployment and computation offloading," {\it IEEE Trans. Cloud Comput.}, vol. 12, no. 4, pp. 1372-1386, Oct.-Dec. 2024.
	
	\bibitem{RW UAV MEC 04}
	L. Wang, K. Wang, C. Pan, W. Xu, N. Aslam, and A. Nallanathan, ``Deep reinforcement learning based dynamic trajectory control for UAV-assisted mobile edge computing," {\it IEEE Trans. Mobile Comput.}, vol. 21, no. 10, pp. 3536-3550, Oct. 2022.
	
	\bibitem{RW UAV MEC 05}
	Y. Liu, C. Yang, Y. Tang, H. Zhao, Y. Liu, and S. Xie, ``Cost-efficient deployment optimization for multi-UAV-assisted vehicular edge computing networks," {\it IEEE Internet Things J.}, vol. 12, no. 6, pp. 6158-6170, Mar. 2025.
	
	\bibitem{RW UAV MEC 06}
	A. M. Seid, G. O. Boateng, S. Anokye, T. Kwantwi, G. Sun, and G. Liu, ``Collaborative computation offloading and resource allocation in multi-UAV-assisted IoT networks: A deep reinforcement learning approach," {\it IEEE Internet Things J.}, vol. 8, no. 15, pp. 12203-12218, Aug. 2021.
	
	\bibitem{RW UAV MEC 07}
	N. Zhao, Z. Ye, Y. Pei, Y. -C. Liang, and D. Niyato, ``Multi-agent deep reinforcement learning for task offloading in UAV-assisted mobile edge computing," {\it IEEE Trans. Wireless Commun.}, vol. 21, no. 9, pp. 6949-6960, Sep. 2022.
	
	\bibitem{RW sat-UAV MEC 01}
	Y. Lin, W. Feng, T. Zhou, {\it et al.}, ``Integrating satellites and mobile edge computing for 6G wide-area edge intelligence: Minimal structures and systematic thinking," {\it IEEE Netw.}, vol. 37, no. 2, pp. 14-21, Mar./Apr. 2023.
	
	\bibitem{RW sat-UAV MEC 02}
	C. Zhou, W. Wu, H. He, {\it et al.}, ``Deep reinforcement learning for delay-oriented IoT task scheduling in SAGIN," {\it IEEE Trans. Wireless Commun.}, vol. 20, no. 2, pp. 911-925, Feb. 2021.
	
	\bibitem{RW sat-UAV MEC 03}
	Y. K. Tun, K. T. Kim, L. Zou, Z. Han, G. Dán, and C. S. Hong, ``Collaborative computing services at ground, air, and space: An optimization approach," {\it IEEE Trans. Veh. Technol.}, vol. 73, no. 1, pp. 1491-1496, Jan. 2024.
	
	\bibitem{RW sat-UAV MEC 04}
	N. N. Ei, J. S. Yoon, and C. S. Hong, ``Energy-aware task offloading and resource allocation in space-aerial-integrated MEC system," in {\it Proc. Asia-Pac. Netw. Oper. Manag. Symp. (APNOMS)}, Takamatsu, Japan, 2022, pp. 1-6.
	
	\bibitem{RW sat-UAV MEC 06}
	Y.-H. Chao, C.-H. Chung, C.-H. Hsu, Y. Chiang, H.-Y. Wei, and C.T. Chou, ``Satellite-UAV-MEC collaborative architecture for task offloading in vehicular networks," in {\it Proc. IEEE Globecom Workshops (GC Wkshps)}, Taipei, Taiwan, 2020, pp. 1-6.
	
	\bibitem{RW sat-UAV MEC 07}
	S. Jung, S. Jeong, J. Kang, and J. Kang, ``Marine IoT systems with space–air–sea integrated networks: Hybrid LEO and UAV edge computing," {\it IEEE Internet Things J.}, vol. 10, no. 23, pp. 20498-20510, Dec. 2023.
	
	\bibitem{RW sat-UAV MEC 08}
	S. Qi, B. Lin, Y. Deng, X. Chen, and Y. Fang, ``Minimizing maximum latency of task offloading for multi-UAV-assisted maritime search and rescue," {\it IEEE Trans. Veh. Technol.}, vol. 73, no. 9, pp. 13625-13638, Sep. 2024.
	
	\bibitem{RW sat-UAV MEC 09}
	P. Tong, J. Liu, X. Wang, B. Bai, and H. Dai, ``UAV-Enabled age-optimal data collection in wireless sensor networks," in {\it Proc. IEEE Int. Conf. Commun. Workshops (ICC Wkshps)}, Shanghai, China, 2019, pp. 1-6.
	
	\bibitem{RW sat-UAV MEC 10}
	C. Huang, G. Chen, P. Xiao, Y. Xiao, Z. Han, and J. A. Chambers, ``Joint offloading and resource allocation for hybrid cloud and edge computing in SAGINs: A decision assisted hybrid action space deep reinforcement learning approach," {\it IEEE J. Sel. Areas Commun.}, vol. 42, no. 5, pp. 1029-1043, May 2024.
	
%	\bibitem{syst}
%	W. Feng, Y. Wang, Y. Chen, N. Ge, and C. -X. Wang, ``Structured satellite-UAV-terrestrial networks for 6G Internet of Things," {\it IEEE Netw.}, vol. 38, no. 4, pp. 48-54, Jul. 2024.
	
	\bibitem{channel}
	A. Al-Hourani, S. Kandeepan, and S. Lardner, ``Optimal LAP altitude	for maximum coverage,” {\it IEEE Wireless Commun. Lett.}, vol. 3, no. 6, pp. 569–572, Dec. 2014.
	
	\bibitem{ergodic rate}
	W. Feng, Y. Wang, N. Ge, J. Lu, and J. Zhang, ``Virtual MIMO in multi-cell distributed antenna systems: Coordinated transmissions with large-scale CSIT,” {\it IEEE J. Sel. Areas Commun.}, vol. 31, no. 10, pp. 2067-2081, Oct. 2013.
	\bibitem{channel coeff}
	Y. Chen, W. Feng, and G. Zheng, ``Optimum placement of UAV as relays," {\it IEEE Commun. Lett.}, vol. 22, no. 2, pp. 248-251, Feb. 2018.
	
\end{thebibliography}
\end{document}